\title{On Satisficing in Quantitative Games}
\author{Suguman Bansal\inst{1} \Letter 
\and Krishnendu Chatterjee\inst{2}
\and Moshe Y. Vardi\inst{3}}
\authorrunning{S. Bansal, K. Chatterjee, and M.Y. Vardi}
\institute{University of Pennsylvania, Philadelphia, USA \email{suguman@seas.upenn.edu}
\and IST Austria, Klosterneuburg, Austria,
\email{krishnendu.chatterjee@ist.ac.at}
\and Rice University, Houston,  USA \email{vardi@cs.rice.edu}}
\tikzset{elliptic state/.style={draw,ellipse}}
\newcommand*{\N}{\mathbb{N}}
\newcommand*{\Q}{\mathbb{Q}}
\newcommand*{\Z}{\mathbb{Z}}
\renewcommand*{\L}{\mathcal{L}}
\renewcommand*{\O}{\mathcal{O}}
\newcommand{\F}{\mathcal{F}}
\newcommand{\ap}{\mathit{AP}}
\newcommand{\GA}{\mathsf{GA}}
\renewcommand*{\mod}[1]{|#1|}
\newcommand*{\floor}[1]{\lfloor #1 \rfloor}
\newcommand*{\A}{\mathcal{A}}
\newcommand*{\Statess}{\mathit{S}}
\newcommand*{\State}{S}
\newcommand*{\Start}{s_I}
\newcommand*{\Final}{\mathcal{F}}
\newcommand*{\wt}{\mathit{wt}}
\newcommand*{\init}{v_\mathsf{init}}
\newcommand*{\pre}[2]{#1[\cdots#2]}
\newcommand*{\post}[2]{#1[#2\cdots]}
\newcommand*{\suf}[2]{#1[#2\dots]}
\newcommand{\gap}[1]{\mathsf{gap}(#1,d)}
\newcommand*{\DSum}[2]{\mathit{DS}({#1}, {#2})}
\newcommand*{\R}{\mathsf{R}}
\newcommand*{\transwt}{\mathsf{cost}}
\newcommand*{\opt}{W}
\newcommand{\optimize}{\mathsf{VIOptimal}}
\newcommand{\compSatisfice}{\mathsf{CompSatisfice}}
\newcommand{\visatisfice}{\mathsf{VISatisfice}}
\begin{document}

\maketitle

%% Abstract

\begin{abstract}

Several problems in planning and reactive synthesis can be reduced to the analysis of two-player quantitative graph games. {\em Optimization} is one form of analysis. We argue that in many cases it may be better to replace the optimization problem with the {\em satisficing problem}, where instead of searching for optimal solutions, the goal is to search for solutions that adhere to a given threshold bound. 

This work defines and investigates the satisficing problem on a two-player graph game with the discounted-sum cost model. We show that while the satisficing problem can be solved using numerical methods just like the optimization problem, this approach does not render compelling benefits over optimization. When the discount factor is, however, an integer, we present another approach to satisficing, which is purely based on automata methods. We show that this approach is algorithmically more performant -- both theoretically and empirically -- and demonstrates the broader applicability of satisficing over optimization.
\end{abstract}

%% Main paper

\section{Introduction}
\label{Sec:Intro}

{\em Quantitative properties} of systems are increasingly being explored in automated reasoning ~\cite{baierprobabilistic,clark2007static,finkbeiner2018model,Kwi07,kwiatkowska2010advances,seshia2018formal}.
In decision-making domains such as planning and reactive synthesis, quantitative properties have been deployed to describe {\em soft constraints} such as quality measures~\cite{bloem2009better},  cost and resources~\cite{he2017reactive,lahijanian2015time}, rewards~\cite{wen2015correct}, and the like. 
Since these constraints are soft, it suffices to generate solutions that are {\em good enough} w.r.t. the quantitative property. 

%MYV: Add citation for the sentence below.
Existing approaches on the analysis of quantitative properties have, however, primarily focused on {\em optimization} of these constraints, i.e., to generate optimal solutions. 
We argue that there may be disadvantages to searching for optimal solutions, where \emph{good enough} ones may suffice. First,  optimization may be more expensive than searching for good-enough solutions. Second, optimization restricts the search-space of possible  solutions, and thus could limit the broader applicability of the resulting solutions. For instance, to generate solutions that operate {\em within} battery life, it is too restrictive to search for solutions with  {\em minimal} battery consumption. Besides, solutions with minimal battery consumption may be limited in their applicability, since they  may not satisfy other goals, such as desirable temporal tasks. 

To this end, this work focuses on directly searching for good-enough solutions. We propose an alternate form of analysis of quantitative properties in which the objective is to search for a solution that adheres to {\em a given threshold bound}, possibly derived from a physical constraint such as battery life. We call this the {\em satisficing problem}, a term popularized by H.A.Simon in economics to mean {\em satisfy and suffice}, implying a search for good-enough solutions~\cite{satisficing}. Through theoretical and empirical investigation, we make the case that satisficing  is algorithmically more performant than optimization and, further, that satisficing solutions may have broader applicability than optimal solutions. 

This work formulates and investigates the satisficing problem on two-player, finite-state games with the discounted-sum (DS) cost model, which is a standard cost-model in decision-making domains~\cite{osborne1994course,puterman1990markov,sutton1998introduction}. 
In these games, players take turns to pass a token along the {\em transition relation} between the states. As the token is pushed around, the play accumulates costs along the transitions using the  DS cost model. The players are assumed to have opposing objectives: one player maximizes the cost, while the other player minimizes it. 
We define the satisficing problem as follows:  {\em Given a {\em threshold value $v \in \Q$}, does there exist a strategy for the minimizing (or maximizing) player that ensures the cost of all resulting plays is strictly or non-strictly lower (or greater) than the threshold $v$?} 

Clearly, the satisficing problem is decidable since the optimization problem on these quantitative games is known to be solvable in pseudo-polynomial time~\cite{hansen2013strategy,littman1996algorithms,zwick1996complexity}. 
To design an algorithm for satisficing, we first adapt the celebrated value-iteration (VI) based algorithm for optimization~\cite{zwick1996complexity} (\textsection~\ref{Sec:Optimization}). We show, however, that this algorithm, called $\visatisfice$, displays the same complexity as optimization and hence renders no complexity-theoretic advantage. To obtain worst-case complexity, we perform a thorough worst-case analysis of VI for optimization. It is interesting that a thorough  analysis of VI for optimization had hitherto been absent from the literature, despite the popularity of VI. To address this gap, we first prove that VI should be executed for $\Theta(|V|)$ iterations to compute the optimal value, where $V$ and $E$ refer to the sets of states and transitions in the quantitative game. Next, to compute the overall complexity, we take into account the cost of arithmetic operations as well, since they appear in abundance in VI. We demonstrate an orders-of-magnitude difference between the complexity of VI  under different cost-models of arithmetic. For instance, for integer discount factors, we show that VI  is %$\O(|V|^2\cdot |E|)$ and $\O(|V|^4\cdot |E|)$ 
$\O(|V|\cdot |E|)$ and $\O(|V|^2\cdot |E|)$ under the unit-cost and bit-cost models of arithmetic, respectively. Clearly, this shows that VI for optimization, and hence $\visatisfice$, does not scale to large quantitative games. 

We then present a  purely automata-based approach for satisficing (\textsection~\ref{Sec:Satisfice}). While this approach applies to integer discount factors only, it solves satisficing in $\O(|V|+|E|)$ time. This shows that there is a fundamental separation in complexity between satisficing and VI-based optimization, as even the lower bound on the number of iterations in VI is higher. In this approach, the satisficing problem is reduced to solving a safety or reachability game. Our core observation is that the criteria to fulfil  satisficing with respect to threshold value $v\in\Q$ can be expressed 
%MYV: Noot clear what is a member of what.
as membership in an automaton 
that accepts a weight sequence $A$ iff $\DSum{A}{d}$ $\mathsf{R}$ $v$ holds, where $d>1$ is the discount factor and $\mathsf{R} \in \{\leq, \geq, <,>\}$. In existing literature, such automata are called {\em comparator automata} (comparators, in short) when the threshold value  $v=0$~\cite{BCVFoSSaCS18,BCVlmcs2019}. They are known to have a compact safety or co-safety automaton representation~\cite{BVCAV19,kupferman1999model}, which could be used to reduce the satisficing problem with zero threshold value. To solve satisficing for arbitrary threshold values $v\in Q$, we extend existing results on comparators to permit arbitrary but fixed threshold values $ v \in \Q$. An empirical comparison between the performance of $\visatisfice$, VI for optimization, and automata-based solution for satisficing shows that the latter outperforms the others in efficiency, scalability, and robustness.  

In addition to improved algorithmic performance, we demonstrate that satisficing solutions have broader applicability than optimal ones (\textsection~\ref{Sec:Temporalgoals}). 
We examine this with respect to their ability to extend 
to temporal goals. That is, the problem is to find optimal/satisficing solutions that also satisfy a given temporal goal. Prior results have shown this to not be possible with optimal solutions~\cite{chatterjee2017quantitative}. In contrast, we show satisficing extends to temporal goals when the discount factor is an integer. This occurs because both satisficing and satisfaction of temporal goals are solved via automata-based techniques, which can be easily integrated.  

In summary, this work contributes to showing that satisficing has algorithmic and applicability advantages over optimization in (deterministic) quantitative games. In particular, we have shown that the automata-based approach for satisficing have advantages over approaches in numerical methods like value-iteration. This gives yet another evidence in favor of automata-based quantitative reasoning and opens up several compelling directions for future work. 

% We show that this approach is algorithmically more performant -- both theoretically and empirically -- and demonstrates the broader applicability of satisficing over optimization.
 
%A long version of this paper with complete proofs can be found at~\cite{tacas2021}.

\section{Preliminaries}
\label{Sec:Prelims}

\subsection{Two-player graph games}

\paragraph{Reachability and safety games.}

Both {\em reachability} and {\em safety games} are defined over the structure  $G = (V  = V_0 \uplus V_1, \init, E, \F)$~\cite{thomas2002automata}.
It consists of a directed graph $(V,E)$,
and a partition $(V_0, V_1)$ of its  states $V$. 
State $\init$ is the {\em initial state} of the game. 
The set of successors of state $v$ is designated by $vE $. For convenience, we assume that every state has at least one outgoing edge, i.e, $vE  \neq \emptyset$ for all $v \in V$. $ \F \subseteq V$ is a non-empty set of states. $\F$ is referred to as {\em accepting} and {\em rejecting} states in reachability and safety games, respectively. 

A {\em play} of a game involves two players, denoted by $P_0$ and $P_1$, to create an infinite path by moving a token along the transitions as follows:
At the beginning, the token is at the initial state. If the current
position $v$ belongs to $V_i$, then $P_i$ chooses the successor state from 
$vE$.
Formally,
a play $\rho = v_0v_1v_2\dots$ is an infinite sequence of states such that the first state $v_0 = v_{\mathsf{init}}$, and each pair of successive states is a transition, i.e., $(v_k, v_{k+1}) \in E$ for all $k\geq 0$. 
A play is  {\em winning for player $P_1$} in a reachability game if it visits an accepting state, and {\em winning for player $P_0$} otherwise. 
The opposite holds in safety games, i.e., a play is winning for player $P_1$ if it does not visit any rejecting state, and winning for $P_0$ otherwise.

 A {\em strategy} for a player is a
recipe that guides the player on which state to go next to based on the history of the play.  
% is a partial function $\Pi_i : V^*V_i\rightarrow V$ such that $v = \Pi_i(v_0v_1\dots v_k)$ belongs to $v_kE$ . 
 %A player $P_i$ is said to follow a strategy $\Pi$ on a play $\rho$ if for all $k$-length prefixes $v_0v_1\dots v_{k-1}$ of $\rho$ for all $v_{k-1} \in V_i$ it is that $v_k = \Pi(v_0v_1\dots v_{k-1})$.
A {\em strategy is  winning for a player $P_i$} if for all strategies of the opponent player $P_{1-i}$, the resulting plays are winning for $P_i$. 
 To {\em solve} a graph game means to  determine whether there exists a winning strategy for player $P_1$. 
Reachability and safety games are solved in $\O(|V|+|E|)$.

\paragraph{Quantitative graph games.}

A {\em quantitative graph game} (or quantitative game, in short) is defined over a structure $G = (V  = V_0 \uplus V_1, \init, E, \gamma)$. $V$, $V_0$, $V_1$, $\init$, $E$, plays and strategies are defined as earlier.  
Each transition of the game is associated with a {\em cost} determined by the {\em cost function} $\gamma: E \rightarrow \Z$.
The {\em cost sequence} of a play $\rho$ is the sequence of costs $w_0w_1w_2\dots$ such that  $w_k = \gamma((v_k, v_{k+1}))$ for all $i\geq 0$.
Given a discount factor $d>1$, the {\em cost of play} $\rho$, denoted $\wt(\rho)$, is the discounted sum of its cost sequence, i.e., $\wt(\rho) = \DSum{\rho}{d} = w_0 + \frac{w_1}{d} + \frac{w_2}{d^2} + \dots$.

\subsection{Automata and formal languages}

\paragraph{B\"uchi automata.}
A  {\em B\"uchi automaton} is a tuple
  $\A = (\Statess$, $\Sigma$, $\delta$, $s_\mathcal{I}$, $\Final)$, where
  $ \Statess $ is a finite set of {\em states}, $ \Sigma $ is a finite {\em input alphabet},  $ \delta \subseteq (\Statess \times \Sigma \times \Statess) $ is the   {\em transition relation}, state $ s_\mathcal{I} \in \Statess $ is the {\em initial state}, and $ \Final \subseteq \Statess $ is the set of {\em accepting states}~\cite{thomas2002automata}.
A B\"uchi automaton is {\em deterministic} if for all states $ s $ and
inputs $a$, $ |\{s'|(s, a, s') \in \delta \textrm{ for some $s'$} \}|
\leq 1 $.
%and $|\StartState|=1$. 
%Otherwise, it is {\em nondeterministic}.
%A B\"uchi automaton is {\em complete} if for all states $ s $ andinputs $a$, $ |\{s'|(s, a, s') \in \delta \textrm{ for some $s'$} \}| \geq 1$. 
For a word $ w = w_0w_1\dots \in \Sigma^{\omega} $, a {\em run} $ \rho$ of $ w $ is a sequence of states $s_0s_1\dots$ s.t.
$ s_0 = s_\mathcal{I}$, and $ \tau_i =(s_i, w_i, s_{i+1}) \in \delta $ for all $i$.
Let $ \mathit{inf}(\rho) $ denote the set of states that occur infinitely
often in run ${\rho}$. 
A run $\rho$ is an {\em accepting run} if $ \mathit{inf}(\rho)\cap \Final \neq \emptyset $. A word $w$ is an accepting word if it has an accepting run. 
The language of B\"uchi automaton $\A$ is the set of all words accepted by $\A$. 
Languages accepted by B\"uchi automata are called {\em $\omega$-regular}.
%By abuse of notation, we write $ w \in \A $ and $\rho \in \A$ if $w$ and $\rho$ are an accepting word and an accepting run of $\A$.
%B\"uchi automata are  closed under set-theoretic union, intersection, and complementation~\cite{thomas2002automata}. 

%We review {\em safety and co-safety languages}.  
\paragraph{Safety and co-safety languages.}
Let $ \L \subseteq \Sigma^\omega$ be a language  over alphabet $\Sigma$. 
A finite word $w \in \Sigma^*$ is a {\em bad prefix} for $\L$  if for all infinite words $y \in \Sigma^\omega$, $x\cdot y \notin \L$.
A language $\L$ is a {\em safety language} if every word $ w \notin \L$ has a bad prefix for $\L$~\cite{alpern1987recognizing}.
A {\em co-safety language} is the complement of a safety language~\cite{kupferman1999model}.
Safety and co-safety languages that are $\omega$-regular are represented by specialized B\"uchi automata called {\em safety} and {\em co-safety automata}, respectively.

\paragraph{Comparison language and comparator automata.}

Given integer bound $\mu>0$, discount factor $d >1$, and relation $\mathsf{R} \in \{<,>,\leq, \geq, =, \neq\}$
the {\em comparison language with upper bound $\mu$, relation $\mathsf{R}$,  discount factor $d$} is the language of  words over the alphabet $\Sigma = \{-\mu, \dots, \mu\}$ that accepts  $A \in \Sigma^{\omega}$ iff $\DSum{A}{d}$ $\mathsf{R}$ $0$ holds~\cite{BCVCAV18,BVCAV19}. The {\em comparator automata  with upper bound $\mu$, relation $\mathsf{R}$, discount factor $d$} is the automaton that accepts the corresponding comparison language~\cite{BCVFoSSaCS18}. 
Depending on $\R$, these languages are safety or co-safety~\cite{BVCAV19}.
A comparison language is said to be {\em $\omega$-regular} if its automaton is a B\"uchi automaton.  Comparison languages are $\omega$-regular iff the discount factor is an integer~\cite{BCVlmcs2019}.

\section{Satisficing via Optimization}
\label{Sec:Optimization}

This section shows that there are no complexity-theoretic benefits to solving the satisficing problem via algorithms for the optimization problem. 

\textsection~\ref{Sec:Review} formally defines the satisficing problem and reviews the celebrated value-iteration (VI) algorithm for optimization by Zwick and Patterson (ZP). While ZP {\em claim without proof} that the algorithm runs in pseudo-polynomial time~\cite{zwick1996complexity}, its worst-case analysis is absent from literature.  This section presents a detailed account of the said analysis, and exposes the dependence of VI's worst-case complexity on the discount factor $d>1$ and the cost-model for arithmetic operations i.e. unit-cost or bit-cost model. The analysis is split into two parts: First,  \textsection~\ref{Sec:NumIteration}  shows it is sufficient to terminate after a finite-number of iterations. Next, \textsection~\ref{Sec:worstcasecomplexityVI} accounts for the cost of  arithmetic operations per iteration to compute VI's worst-case complexity under unit- and bit-cost cost models of arithmetic 
Finally, \textsection~\ref{Sec:VIForSatisfice} presents and analyzes our VI-based algorithm for satisficing $\visatisfice$.

\subsection{Satisficing and Optimization}
\label{Sec:Review}

\begin{definition}[Satisficing problem]
Given a quantitative graph game $G$ and a threshold value $v \in \Q$, the {\em satisficing problem} is to determine whether the minimizing (or maximizing) player has a strategy that ensures the cost of all resulting plays is strictly or non-strictly lower (or greater)  than the threshold $v$.
\end{definition}

The satisficing problem can clealy be solved by solving the \emph{optimization problem}. The optimal cost of a quantitative game is that value such that the maximizing and minimizing players can guarantee that the cost of plays is at least and at most the optimal value, respectively. 

\begin{definition}[Optimization problem]
Given a quantitative graph game $G$, the {\em optimization problem} is to compute the optimal cost from all possible plays from the game, under the assumption that the players have opposing objectives to maximize and minimize the cost of plays, respectively. 
\end{definition}

Seminal work by Zwick and Patterson showed the optimization problem is solved by the value-iteration algorithm presented here~\cite{zwick1996complexity}. Essentially, the algorithm plays a  min-max game between the two players. 
Let $\wt_k(v)$ denote the optimal cost of a $k$-length game that begins in state $v \in V$.
Then $\wt_k(v)$ can be computed using the following equations:
The optimal cost of a 1-length game beginning in  state $v \in V$ is  $\max\{ \gamma(v, w) | (v,w)\in E\}$ if  $v \in V_0$ and  $\min\{ \gamma(v, w) | (v,w)\in E\}$ if  $v \in V_1$.  
\begin{comment}
\begin{align*}
    \wt_1(v) =
    \begin{cases}
        \textit{max}\{ \gamma(v, w) | (v,w)\in E\} &\text{ if } v \in V_0 \\
        \textit{min}\{ \gamma(v, w) | (v,w)\in E\} &\text{ if } v \in V_1
    \end{cases}
\end{align*}
\end{comment}
Given the optimal-cost of a $k$-length game, the optimal cost of a $(k+1)$-length game is computed as follows:
\begin{align*}
    \wt_{k+1}(v) = 
    \begin{cases}
    \textit{max}\{ \gamma(v, w) + \frac{1}{d}\cdot \wt_k(w) | (v,w) \in E\} \text{ if } v \in V_0 \\
    \textit{min}\{ \gamma(v, w) + \frac{1}{d}\cdot \wt_k(w) | (v,w) \in E\} \text{ if } v \in V_1 
    \end{cases}
\end{align*}

Let $\opt$ be the optimal cost. Then,
$\opt = \lim_{k\rightarrow \infty}\wt_k(\init)$.~\cite{shapley1953stochastic,zwick1996complexity}.
 
 \subsection{VI: Number of iterations}
\label{Sec:NumIteration}

The VI algorithm described above terminates at {\em infinitum}. To compute the algorithms' worst-case complexity, we establish a linear bound on the number of iterations that is sufficient to compute the optimal cost. We also establish a matching lower bound, showing that our analysis is tight. 

\paragraph{Upper bound on number of iterations.}
The upper bound computation utilizes one key result from existing literature: There exist memoryless strategies for both players such that the cost of the resulting play is the optimal cost~\cite{shapley1953stochastic}. Then, there must exists an optimal play in the form of a {\em simple lasso} in the quantitative game, where a {\em lasso} is a play represented as $v_0v_1\dots v_n(s_0 s_2\dots s_m)^\omega$. We call the initial segment $v_0v_1\dots v_n$ its {\em head}, and the cycle segment $s_0s_1\dots s_m$ its {\em loop}.
A lasso is {\em simple} if each state in $\{v_0\dots v_n, s_0,\dots s_m\}$ is distinct. We begin our proof by assigning constraints on the optimal cost using the simple lasso structure of an optimal play (Corollary~\ref{coro:optimalplay} and Corollary~\ref{coro:optimaldiff}). 

Let $l = a_0\dots a_n(b_0\dots b_m)^{\omega}$ be the cost sequence of a lasso such that $l_1 = a_0\dots a_n$ and  $l_2 = b_0\dots b_m$ are the cost sequences of the head and the loop, respectively. Then the following can be said about $\DSum{l_1\cdot l_2^\omega}{d}$,
\begin{lemma}
\label{lem:optimalplay}
Let $l = l_1\cdot (l_2)^{\omega}$ represent  an integer cost sequence of a lasso, where $l_1$ and $l_2$ are the cost sequences of the head and loop of the lasso. Let $d = \frac{p}{q}$ be the discount factor. 
Then, $\DSum{l}{d}$ is a rational number with denominator at most ($p^{|l_2|} - q^{|l_2|})\cdot (p^{|l_1|})$.
\end{lemma}

Lemma~\ref{lem:optimalplay} is proven by unrolling $\DSum{l_1\cdot l_2^\omega}{d}$. 
%The full proof is in the supplemental material. 
Then, the first constraint on the optimal cost is as follows:

\begin{corollary}
\label{coro:optimalplay}
Let $G=(V, \init, E, \gamma)$ be a quantitative graph game. Let $d = \frac{p}{q}$ be the discount factor. Then the optimal cost of the game is a rational number with denominator at most  $(p^{|V|} - q^{|V|})\cdot (p^{|V|})$ 
\end{corollary}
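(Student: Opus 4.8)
The plan is to exhibit a single optimal play that is a simple lasso, apply Lemma~\ref{lem:optimalplay} to its cost sequence, and then lift the per-lasso denominator bound to the uniform bound claimed in the corollary by a monotonicity argument. First I would invoke the memoryless determinacy result cited just before the lemma: both players have memoryless optimal strategies $\sigma_0$ and $\sigma_1$ whose outcome from $\init$ has cost exactly $\opt$. Fixing these two strategies turns the game into a deterministic walk in which each state has a uniquely determined successor; starting from $\init$, this walk must eventually revisit a state, and the first such repetition splits the play into a head followed by a loop. Since the first repeated state begins the loop and no state recurs before then, every state occurring in the head and in the loop is distinct, so the resulting optimal play $\rho$ is a simple lasso. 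Writing its cost sequence as $l = l_1 \cdot (l_2)^{\omega}$ with $l_1$ the head and $l_2$ the loop, we have $\opt = \wt(\rho) = \DSum{l}{d}$.

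Next, because $\rho$ is simple, the states constituting the head and the loop are pairwise distinct and all lie in $V$, so $|l_1| + |l_2| \le |V|$, and in particular $|l_1| \le |V|$ and $|l_2| \le |V|$. Applying Lemma~\ref{lem:optimalplay} to $l$ then shows that $\opt$ is rational with denominator at most $(p^{|l_2|} - q^{|l_2|}) \cdot (p^{|l_1|})$.

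The remaining step is to replace $|l_1|$ and $|l_2|$ by $|V|$ inside this expression. Here I would use that $d = \frac{p}{q} > 1$ forces $p > q \ge 1$, so both $k \mapsto p^k$ and $k \mapsto p^k - q^k$ are increasing over the nonnegative integers; hence $(p^{|l_2|} - q^{|l_2|}) \cdot (p^{|l_1|}) \le (p^{|V|} - q^{|V|}) \cdot (p^{|V|})$, and a rational whose denominator is at most the smaller quantity has denominator at most the larger one, giving the corollary. I expect the main subtlety to be not anything deep but the length accounting that justifies enlarging the exponents: one must remember that it is the union of head and loop states, not either segment alone, that is bounded by $|V|$ (the head may even be empty, so $p^{|l_1|} = 1$), and one must verify the $p > q$ monotonicity that makes the substitution legitimate.
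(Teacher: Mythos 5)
Your proof is correct and follows essentially the same route as the paper's: invoke memoryless optimal strategies to obtain an optimal play that is a simple lasso, bound the head and loop lengths by $|V|$, and apply Lemma~\ref{lem:optimalplay}. The only difference is that you spell out two steps the paper leaves implicit — the walk-until-first-repetition construction of the simple lasso and the $p>q$ monotonicity that licenses replacing $|l_1|,|l_2|$ by $|V|$ — both of which are sound.
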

\begin{proof}
Recall, there exists a simple lasso that computes the optimal cost. Since a simple lasso is of $|V|$-length at most, the length of its head and loop are at most $|V|$  each. So, the expression from Lemma~\ref{lem:optimalplay} simplifies to   $(p^{|V|} - q^{|V|})\cdot (p^{|V|})$.
\qed
\end{proof}

The second constraint has to do with the minimum non-zero difference between the cost of simple lassos:
\begin{corollary}
\label{coro:optimaldiff}
Let $G=(V, \init, E, \gamma)$ be a quantitative graph game. Let $d = \frac{p}{q}$ be the discount factor.
Then the  minimal non-zero difference between the cost of simple lassos is a rational with denominator at most 
%$(p^{(|V|^2)} - q^{(|V|^2)})\cdot (p^{(|V|^2)})$.
$(p^{(|V|)} - q^{(|V|)})^2\cdot (p^{(2\cdot|V|)})$.
\end{corollary}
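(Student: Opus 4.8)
The plan is to reduce the statement to the single-lasso denominator bound already established in Corollary~\ref{coro:optimalplay}. The key observation is that the argument behind Corollary~\ref{coro:optimalplay} never used optimality: it only used that a simple lasso has head and loop of length at most $|V|$ each. Hence, by Lemma~\ref{lem:optimalplay}, the cost of \emph{any} simple lasso is a rational number whose (reduced) denominator is at most $D := (p^{|V|} - q^{|V|})\cdot p^{|V|}$. I would state this as the first step, so that both lassos appearing in a difference inherit the same bound.

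Next, let $c_1$ and $c_2$ be the costs of two simple lassos with $c_1 \neq c_2$. Writing $c_1 = a_1/D_1$ and $c_2 = a_2/D_2$ in lowest terms, the first step gives $D_1, D_2 \leq D$. Their difference is
\[
c_1 - c_2 = \frac{a_1 D_2 - a_2 D_1}{D_1 D_2},
\]
a nonzero rational represented with denominator $D_1 D_2 \leq D^2 = (p^{|V|} - q^{|V|})^2 \cdot p^{2\cdot|V|}$. Since reducing a fraction only shrinks its denominator, the reduced denominator of $c_1 - c_2$ is likewise at most $D^2$. Because the minimal non-zero difference between simple-lasso costs is attained by some such pair $(c_1, c_2)$, it too is a rational with denominator at most $D^2$, which is exactly the claimed bound.

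I do not expect a genuine obstacle here; the work is purely arithmetic once the single-lasso bound is transported from the optimal play to arbitrary simple lassos. The one point requiring care is bookkeeping around the phrase ``denominator at most'': I would be explicit that exhibiting \emph{a} representing denominator bounded by $D^2$ suffices, since the lowest-terms denominator divides every representing denominator and is therefore no larger. A secondary subtlety is that Lemma~\ref{lem:optimalplay} is phrased for a lasso with given head/loop lengths $|l_1|$ and $|l_2|$; I would note that for a simple lasso these are each bounded by $|V|$, so substituting $|l_1| = |l_2| = |V|$ recovers the uniform bound $D$, and taking the product of two such denominators yields $D^2$.
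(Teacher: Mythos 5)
Your proposal is correct and follows essentially the same route as the paper: the paper likewise bounds the denominator of each simple-lasso cost by $(p^{|V|} - q^{|V|})\cdot p^{|V|}$ (via Lemma~\ref{lem:optimalplay} and Corollary~\ref{coro:optimalplay}) and then invokes the fact that the difference of two rationals with denominators at most $a$ has denominator at most $a^2$. Your version merely spells out the details the paper leaves implicit --- that the single-lasso bound applies to arbitrary simple lassos, not just the optimal one, and the reduced-fraction bookkeeping --- which is sound but not a different argument.
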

\begin{proof}

Given two rational numbers with denominator at most $a$, an upper bound on the denominator of minimal non-zero difference of these two rational numbers is $a^2$. 
Then, using the result from Corollary~\ref{coro:optimalplay}, we immediately obtain that the minimal non-zero difference  between the cost of two lassos is 
a rational number with denominator at most $(p^{(|V|)} - q^{(|V|)})^2\cdot (p^{(2\cdot|V|)})$.
%The difference of two simple lassos $l_1$ and $l_2$ of length at most $|V|$ can be represented  by another simple lasso $l = l_1\times l_2$ of length at most $|V|^2$. If the maximum length of the lassos is $|V|$, then the maximum length of the difference lasso will be $|V|^2$. Then, from Lemma~\ref{lem:optimalplay} we immediately obtain that the maximum value of the denominator of the minimum non-zero difference of simple lassos is $(p^{(|V|^2)} - q^{(|V|^2)})\cdot (p^{(|V|^2)})$.
\qed
\end{proof}

For notational convenience, let $\mathsf{bound}_{{\opt}} = (p^{|V|} - q^{|V|})\cdot (p^{|V|})$ and
%$\mathsf{bound}_{\mathsf{diff}} = (p^{(|V|^2)} - q^{(|V|^2)})\cdot (p^{(|V|^2)})$.
$\mathsf{bound}_{\mathsf{diff}} = (p^{(|V|)} - q^{(|V|)})^2\cdot (p^{(2\cdot |V|)})$.
Wlog, $|V|>1$. 
Since, $\frac{1}{\mathsf{bound}_{\mathsf{diff}}} < \frac{1}{\mathsf{bound}_{{\opt}}}$, there is at most one rational number with denominator $\mathsf{bound}_{\mathsf{W}}$ or less in any interval of size
$\frac{1}{\mathsf{bound}_{\mathsf{diff}}}$. 
Thus, if we can identify an interval of size less than $\frac{1}{\mathsf{bound}_{\mathsf{diff}}}$ around the optimal cost, then due to Corollary~\ref{coro:optimalplay}, the optimal cost will be the unique rational number with denominator $\mathsf{bound}_{\mathsf{W}}$ or less in this interval. 

Thus, the final question is to identify a small enough interval (of size $\frac{1}{\mathsf{bound}_{\mathsf{diff}}}$ or less) such that the optimal cost lies within it. To find an interval around the optimal cost, we use
a finite-horizon approximation of the optimal cost:
\begin{lemma}
\label{lem:interval}
Let $\opt$ be the optimal cost in quantitative game $G$. Let $\mu>0$ be the maximum of absolute value of cost on  transitions in $G$. Then, for all $k \in \N$, 
\[
    \wt_k(v_{\mathsf{init}}) - \frac{1}{d^{k-1}}\cdot \frac{\mu}{d-1} \leq
    \opt \leq
    \wt_k(v_{\mathsf{init}}) + \frac{1}{d^{k-1}}\cdot \frac{\mu}{d-1}
    \]
\end{lemma}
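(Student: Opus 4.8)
The plan is to show that the finite-horizon value $\wt_k(\init)$ differs from the true optimal cost $\opt$ by no more than the weight of the discounted tail that the $k$-length game ignores, namely $\sum_{i \geq k} \mu/d^i = \frac{1}{d^{k-1}}\cdot\frac{\mu}{d-1}$. That the claimed error term is \emph{exactly} this geometric tail is a strong hint that the two inequalities should be established by a strategy-commitment argument: in the infinite game, have one player follow its \emph{optimal $k$-step strategy} for the first $k$ transitions and play arbitrarily afterwards, and then control the remaining infinite suffix by the crude per-transition bound $|w_i| \leq \mu$.

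For the upper bound $\opt \leq \wt_k(\init) + \frac{1}{d^{k-1}}\cdot\frac{\mu}{d-1}$, I would let the minimizing player commit to its optimal strategy in the $k$-length game for the first $k$ moves. Against \emph{any} maximizer strategy, the discounted cost accrued on the first $k$ transitions is then at most $\wt_k(\init)$, by the very definition of the $k$-step value as a min-max; crucially, this holds regardless of how the play continues, since those transitions depend only on the first $k$ moves. The cost of the infinite suffix $\sum_{i \geq k} w_i/d^i$ is bounded in absolute value by $\sum_{i \geq k}\mu/d^i = \frac{1}{d^{k-1}}\cdot\frac{\mu}{d-1}$. Summing the two contributions shows the minimizer can guarantee total cost at most $\wt_k(\init) + \frac{1}{d^{k-1}}\cdot\frac{\mu}{d-1}$, which bounds $\opt$ from above. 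The lower bound is entirely symmetric: the maximizer commits to its optimal $k$-step strategy and the suffix now contributes at least $-\frac{1}{d^{k-1}}\cdot\frac{\mu}{d-1}$.

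The step I expect to be most delicate is justifying the prefix bound rigorously: I must argue that following the optimal $k$-step strategy for exactly $k$ moves pins the \emph{prefix} discounted cost at or below (resp. above) $\wt_k(\init)$ uniformly over all adversary responses, even though the optimal infinite-horizon and finite-horizon strategies generally differ and even though the committing player behaves arbitrarily after step $k$. This requires cleanly separating a play's value into its first-$k$-transitions part and its tail, and invoking the min-max characterization of $\wt_k$ together with determinacy of the game to identify $\opt$ with the value each player can secure. An alternative, which sidesteps reasoning about commitments, is to proceed by induction on $k$ using the Bellman fixed-point equation $\opt(v) = \mathrm{opt}_{w}\{\gamma(v,w) + \frac{1}{d}\opt(w)\}$ for the per-state infinite-horizon values (with $\opt = \opt(\init)$): the $\frac{1}{d}$-contraction property $\max_v |\opt(v) - \wt_{k+1}(v)| \leq \frac{1}{d}\max_v|\opt(v) - \wt_k(v)|$ reduces the claim to the base estimate $\max_v|\opt(v) - \wt_1(v)| \leq \frac{\mu}{d-1}$, and iterating the contraction $k-1$ times yields the stated two-sided bound at $v = \init$.
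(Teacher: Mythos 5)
Your proposal is correct and takes essentially the same approach as the paper: the paper's (much terser) proof likewise sandwiches $\opt$ between $\wt_k(\init)$ plus/minus the worst-case discounted tail obtained by extending the $k$-length game with per-round costs $\pm\mu$, which is exactly your prefix-plus-tail decomposition. Your strategy-commitment justification and the contraction-based alternative only spell out rigorously what the paper leaves implicit.
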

\begin{proof}
    Since $\opt$ is the limit of $\wt_k(v_{\mathsf{init}})$ as $k\rightarrow \infty$, $\opt$  must lie in between the
    minimum and maximum cost possible if the $k$-length game is extended to an infinite-length game. 
    The minimum possible extension would be when the $k$-length game is extended by iterations in which the cost incurred in each round is $-\mu$. Therefore, the minimum possible value is $\wt_k(v_{\mathsf{init}}) - \frac{1}{d^{k-1}}\cdot \frac{\mu}{d-1} $.
    Similarly, the maximum possible value is $\wt_k(v_{\mathsf{init}}) + \frac{1}{d^{k-1}}\cdot \frac{\mu}{d-1} $.
\qed
\end{proof}

Now that we have an interval around the optimal cost, we can compute the number of iterations of VI required to make it smaller than $1/\mathsf{bound}_{\mathsf{diff}}$. \begin{theorem}
\label{thrm:vsquareiterations}
Let $G = (V, \init,E,\gamma)$ be a quantitative graph game.
Let $\mu>0$ be the maximum of absolute value of costs along transitions. The number of iterations required by the value-iteration algorithm  is
\begin{enumerate}
    \item $\O(|V|)$ when discount factor $d\geq 2$, 
    \item $\O\Big(\frac{\log(\mu)}{d-1} + |V|\Big)$ when discount factor $1<d<2$.
    %\item $\O(|V|^2)$ when discount factor $d\geq 2$, 
    %\item $\O\Big(\frac{\log(\mu)}{d-1} + |V|^2\Big)$ when discount factor $1<d<2$.
\end{enumerate}
\end{theorem}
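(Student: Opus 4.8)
The plan is to combine Lemma~\ref{lem:interval} with Corollary~\ref{coro:optimalplay} and Corollary~\ref{coro:optimaldiff}. The logical skeleton was already assembled in the discussion preceding the theorem: if after $k$ iterations the approximation interval around $\opt$ has size less than $\frac{1}{\mathsf{bound}_{\mathsf{diff}}}$, then $\opt$ is the unique rational with denominator at most $\mathsf{bound}_{\opt}$ in that interval, so it can be recovered exactly. Hence I would define the target number of iterations $k$ to be the smallest $k$ for which the half-width bound from Lemma~\ref{lem:interval} satisfies

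\begin{align*}
    2\cdot\frac{1}{d^{k-1}}\cdot\frac{\mu}{d-1} < \frac{1}{\mathsf{bound}_{\mathsf{diff}}},
\end{align*}

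and then solve this inequality for $k$.

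**Solving the inequality.**

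Taking logarithms (base $d$, or base $2$ with a change-of-base factor), the condition becomes roughly

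\begin{align*}
    k > (k\text{-independent terms}) = \log_d\!\big(\mathsf{bound}_{\mathsf{diff}}\big) + \log_d\!\Big(\frac{2\mu}{d-1}\Big) + 1.
\end{align*}

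The dominant work is estimating $\log_d(\mathsf{bound}_{\mathsf{diff}})$. Recall $\mathsf{bound}_{\mathsf{diff}} = (p^{|V|} - q^{|V|})^2\cdot p^{2|V|}$ with $d = \frac{p}{q}$. I would upper-bound $p^{|V|} - q^{|V|} \leq p^{|V|}$, so $\mathsf{bound}_{\mathsf{diff}} \leq p^{4|V|}$, giving $\log_d(\mathsf{bound}_{\mathsf{diff}}) \leq 4|V|\cdot\frac{\log p}{\log d}$. The key is to control the ratio $\frac{\log p}{\log d} = \frac{\log p}{\log p - \log q}$. For case (1), when $d \geq 2$ we have $p \geq 2q$, so $\log p - \log q \geq \log 2$ while $\log p$ stays within a constant factor of $\log p - \log q$ (since $q < p/2$ forces $\log q < \log p - 1$, hence $\log p < 2(\log p - \log q)$ when $p$ is bounded away, and more carefully one shows the ratio is $O(1)$); this collapses the $|V|$-term to $\O(|V|)$ and likewise makes the $\log(\mu)/(d-1)$ contribution a constant, yielding $\O(|V|)$. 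For case (2), with $1 < d < 2$ the denominator $\log d = \log\frac{p}{q}$ can be small, so I would use the elementary bound $\log d = \log\big(1 + (d-1)\big) \geq c\,(d-1)$ for some constant $c$ on the relevant range, which converts the $\frac{1}{\log d}$ factors into $\frac{1}{d-1}$ factors, producing the $\frac{\log\mu}{d-1}$ term and an $\frac{|V|}{d-1}$ term that is absorbed into the stated $\O\big(\frac{\log\mu}{d-1} + |V|\big)$ — here I would need to argue the $|V|$-part of $\log_d(\mathsf{bound}_{\mathsf{diff}})$ contributes only $\O(|V|)$ even when $d-1$ is small, which requires relating $\log p$ to $|V|$ through the denominator structure rather than naively.

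**The main obstacle.**

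The delicate step is the case-(2) accounting: showing that the $|V|$-dependent contribution stays $\O(|V|)$ rather than $\O\big(\frac{|V|}{d-1}\big)$ even as $d \to 1^+$. This hinges on the precise form of $\mathsf{bound}_{\mathsf{diff}}$ and how $\log p$ relates to $\log d$ when $p,q$ are coprime integers realizing a $d$ close to $1$; a crude bound would blow up the $|V|$ term. I would handle it by separating $\log_d(\mathsf{bound}_{\mathsf{diff}})$ into the part scaling with $\log(p^{|V|}-q^{|V|}) - \log(\ldots)$ differences that telescope against $\log d$ in the denominator, isolating a clean $\O(|V|)$ contribution, and leaving only the genuinely $\mu$-driven term scaled by $\frac{1}{d-1}$. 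Once the inequality for $k$ is solved and simplified in each regime, the stated bounds follow directly.
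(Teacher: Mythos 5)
Your proof plan has the same skeleton as the paper's (combine Lemma~\ref{lem:interval} with Corollaries~\ref{coro:optimalplay}--\ref{coro:optimaldiff}, then solve $2\cdot\frac{\mu}{(d-1)\,d^{k-1}} < \frac{1}{\mathsf{bound}_{\mathsf{diff}}}$ for $k$), but it breaks at the one step that actually carries the theorem: eliminating the numerator $p$ of the discount factor from $\log_d(\mathsf{bound}_{\mathsf{diff}})$. Your case-(1) claim that $\frac{\log p}{\log d}=\O(1)$ whenever $d\geq 2$ is false: take $p=2q+1$ with $q$ arbitrarily large, so $d=\frac{2q+1}{q}\geq 2$ is in lowest terms, $\log d$ is about $1$, yet $\log p$ grows unboundedly with $q$. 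Your own justification is a non sequitur --- $\log p < 2(\log p-\log q)$ is equivalent to $q^{2}<p$, which does not follow from $2q<p$. So after the lossy estimate $\mathsf{bound}_{\mathsf{diff}}\leq p^{4|V|}$ you are left with $k=\O\big(|V|\cdot\frac{\log p}{\log d}+\cdots\big)$, which is not $\O(|V|)$ uniformly in the discount factor. In case (2) you concede exactly this point (your ``main obstacle'') and offer only an unspecified ``telescoping'' of $\log(p^{|V|}-q^{|V|})$ against $\log d$; that names the difficulty rather than resolving it, so the proposal as written does not establish either stated bound.

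The missing idea, which is how the paper's proof proceeds, is to normalize by $q$ instead of discarding it: multiplying through by $q^{4|V|}$ and using the identity $\frac{q^{4|V|}}{(p^{|V|}-q^{|V|})^{2}\,p^{2|V|}}=\frac{1}{(d^{|V|}-1)^{2}\,d^{2|V|}}$, the condition to solve becomes $(d-1)\,d^{k-1}\geq c\,\mu\,(d^{|V|}-1)^{2}\,d^{2|V|}$, in which $p$ and $q$ no longer appear at all. Taking logarithms and using $\log(d^{|V|}-1)\leq |V|\log d$ gives $k=\O\big(|V|+\frac{\log\mu+\log\frac{1}{d-1}}{\log d}\big)$, and the two cases then fall out cleanly: $\log d\geq 1$ when $d\geq 2$, and $\log d=\Theta(d-1)$ when $1<d<2$ (modulo the paper's own absorption of the lower-order $\mu$- and $(d-1)$-terms). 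Note that the crucial feature is that the $|V|$-dependent term appears with coefficient $\log d$ on both sides, so dividing by $\log d$ leaves $4|V|$ with an absolute constant --- this is the clean version of what you were groping for with ``telescoping,'' but it only works after the rewriting in terms of $d$, not in the $p,q$-form you kept. (One can fairly object that this replacement trades the $p,q$-form separation bound for its weaker $d$-form, a direction the paper's own inequality chain treats cavalierly; but it is precisely this rewriting that yields bounds free of $p$, and without it, or some substitute justification, the stated $\O(|V|)$ and $\O\big(\frac{\log\mu}{d-1}+|V|\big)$ bounds do not follow from your inequalities.)
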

\begin{proof}[Sketch]
As discussed in Corollary~\ref{coro:optimalplay}-\ref{coro:optimaldiff} and Lemma~\ref{lem:interval}, the optimal cost is the unique rational number with denominator $\frac{1}{\mathsf{bound}_{W}}$  or less within the interval $(\wt_k(v_{\mathsf{init}}) - \frac{1}{d^{k-1}}\cdot \frac{\mu}{d-1} , \wt_k(v_{\mathsf{init}}) + \frac{1}{d^{k-1}}\cdot \frac{\mu}{d-1} )$ for a large enough $k>0$ such that the interval's size is less than $\frac{1}{\mathsf{bound}_{\mathsf{diff}}}$. Thus, our task is to determine the value of $k>0$ such that
$
2\cdot\frac{\mu}{d-1\cdot d^{k-1}} \leq \frac{1}{\mathsf{bound}_{\mathsf{diff}}} 
$ holds. The case $d\geq 2$ is easy to simplify.  The case $1<d<2$ involves approximations of logarithms of small values. 
%Details in supplemental material. 
\qed 
\end{proof}

\paragraph{Lower bound on number of iterations of VI.}

%\begin{wrapfigure}{r}{\textwidth}
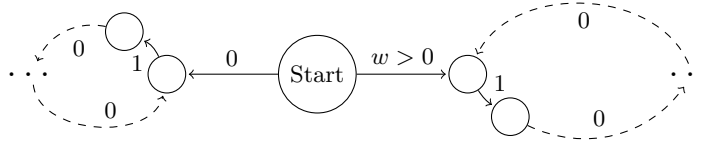
\begin{figure}[t]
\vspace{-1cm}
    \centering
    
    \begin{tikzpicture}[shorten >=1pt,on grid,auto] 
   
    \node[state] (q1) {\footnotesize{Start}};
    \node[state] (q2)  [right of = q1, node distance=2cm, minimum size=0.5cm] {};
    \node[state] (q3)  [left of = q1, node distance=2cm, minimum size=0.5cm] {};
    \node[] (q4)  [right of = q1, node distance=5cm] {\Large{\textellipsis}};
    \node[] (q5) [left of=q3, node distance=1.8cm] {\Large\textellipsis};
    \node[state] (q6) [below right of = q2, minimum size=0.5cm, node distance=0.8cm] {};
    \node[state] (q7) [above left of = q3, minimum size=0.5cm, node distance=0.8cm] {};
  
    \path[->] 
    (q1) edge  node {$w>0$} (q2)
          edge  node [above] {$0$} (q3)
         
    (q2) edge [bend right = 12] node {1} (q6) 
    (q6) edge [bend right = 40, dashed] node   {0} (q4) 
    (q3) edge [bend right = 20] node {1} (q7) 
    (q7) edge [bend right = 40, dashed] node   {0} (q5) 
    (q4) edge [bend right = 65, dashed] node  {0} (q2)
    (q5) edge [bend right = 80, dashed] node {0} (q3);
    \end{tikzpicture}
    \caption{Sketch of game graph which requires $\Omega(|V|)$ iterations}
\label{fig:lowerboundexample}
%\end{wrapfigure} 
\end{figure}
We establish a matching lower bound of $\Omega(|V|)$ iterations to show that our analysis is tight. 

Consider the sketch of a quantitative game in Fig~\ref{fig:lowerboundexample}. 
Let all states belong to the maximizing player. Hence, the optimization problem reduces to searching for a {\em path} with optimal cost. 
Now let the loop on the right-hand side (RHS) be larger than the loop on the left-hand side (LHS).  
For carefully chosen values of $w$ and lengths of the loops, one can show that the path for optimal cost of a $k$-length game is along the RHS loop when $k$ is small, but along the  LHS loop when $k$ is large. 
This way, the correct maximal value can be obtained only at a large value for $k$. Hence the VI algorithm runs for at least enough iterations that the optimal path will be in the LHS loop. 
By meticulous reverse engineering of the size of both loops and the value of $w$, one can guarantee that $k = \Omega(|V|)$.
%As a concrete instance, let the RHS loop and LHS loops have $4n$ and $2n$ edges, respectively, and $w = \frac{1}{d^{3n}} + \frac{1}{d^{7n}} + \cdots + \frac{1}{d^{m\cdot n -1}}$ for a constant value $m>0$

\subsection{Worst-case complexity analysis of VI for optimization}
\label{Sec:worstcasecomplexityVI}

Finally, we complete the worst-case complexity analysis of VI for optimization. We account for the the cost of arithmetic operations since they appear in abundance in VI. 
We demonstrate that there are orders-of-magnitude of difference in complexity under different models of arithmetic, namely unit-cost and bit-cost.

\paragraph{Unit-cost model.}
Under the unit-cost model of arithmetic, all arithmetic operations are assumed to take constant time. 
\begin{theorem}
\label{thrm:optimizationunitcost}
Let $G = (V, \init, E, \gamma)$ be a quantitative graph game. Let $\mu>0$ be the maximum of absolute value of costs along transitions.
The worst-case complexity of the optimization problem under unit-cost model of arithmetic is  
\begin{enumerate}
     \item $\O(|V|\cdot |E|)$ when discount factor $d\geq 2$, 
    \item $\O\Big(\frac{\log(\mu)\cdot |E|}{d-1} + |V|\cdot |E|\Big)$ when discount factor $1<d<2$.
%     \item $\O(|V|^2\cdot |E|)$ when discount factor $d\geq 2$, 
%    \item $\O\Big(\frac{\log(\mu)\cdot |E|}{d-1} + |V|^2\cdot |E|\Big)$ when discount factor $1<d<2$.
\end{enumerate}
\end{theorem}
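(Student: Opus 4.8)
The plan is to combine the iteration bound of Theorem~\ref{thrm:vsquareiterations} with a per-iteration cost analysis of the value-iteration recurrence. Since the unit-cost model charges constant time for every arithmetic operation, the argument reduces to counting how many arithmetic operations one iteration performs and then multiplying by the number of iterations.

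First I would bound the cost of a single iteration. One iteration updates $\wt_{k+1}(v)$ for every state $v \in V$ via the $\max$/$\min$ recurrence. Computing $\wt_{k+1}(v)$ requires scanning each outgoing edge $(v,w)\in E$ and, for each, evaluating $\gamma(v,w) + \frac{1}{d}\cdot\wt_k(w)$, i.e., one multiplication and one addition. Summing over all states, the number of edge evaluations in a single iteration is $\sum_{v\in V}|vE| = |E|$, so an iteration performs $\O(|E|)$ arithmetic operations. Because every state has at least one outgoing edge we have $|E|\geq |V|$, hence the $\max$/$\min$ bookkeeping over the scanned values is also absorbed into $\O(|E|)$. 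Under the unit-cost model each operation costs $\O(1)$, so a single iteration runs in $\O(|E|)$ time.

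Next I would invoke Theorem~\ref{thrm:vsquareiterations}, which caps the number of iterations at $\O(|V|)$ when $d\geq 2$ and at $\O\big(\frac{\log(\mu)}{d-1}+|V|\big)$ when $1<d<2$. Multiplying the per-iteration cost $\O(|E|)$ by these counts yields $\O(|V|\cdot|E|)$ in the first case and $\O\big(\frac{\log(\mu)\cdot|E|}{d-1}+|V|\cdot|E|\big)$ in the second, matching the claimed bounds.

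I expect no serious obstacle in the unit-cost setting; the only step requiring care is the operation count per iteration, specifically the observation that the aggregate over states telescopes to $|E|$ rather than $|V|\cdot|E|$. The genuinely delicate analysis is reserved for the bit-cost model, where the bit-length of the intermediate values $\wt_k(w)$ grows with the iteration index $k$, so arithmetic is no longer constant-time; that growth is what introduces the extra factor separating the unit-cost bound proved here from the anticipated bit-cost bound.
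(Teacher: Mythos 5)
Your proposal is correct and matches the paper's own proof: the paper likewise observes that each iteration visits every transition once for an $\O(|E|)$ per-iteration cost, and multiplies by the iteration bound from Theorem~\ref{thrm:vsquareiterations}. Your version simply spells out the operation-counting (the telescoping $\sum_{v\in V}|vE| = |E|$) that the paper leaves implicit.
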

\begin{proof}
Each iteration takes $\O(E)$ cost since every transition is visited once. 
Thus, the complexity is $\O(|E|)$ multiplied by the number of iterations (Theorem~\ref{thrm:vsquareiterations}). 
\qed
\end{proof}

\paragraph{Bit-cost model.}
Under the bit-cost model, the cost of arithmetic operations depends on the size of the numerical values. Integers are represented in their bit-wise representation. Rational numbers $\frac{r}{s}$ are represented as a tuple of the bit-wise representation of integers $r$ and $s$. For two integers of length $n$ and $m$, the cost of their addition and multiplication is $O(m+n)$ and $O(m\cdot n)$, respectively.

\begin{theorem}
\label{thrm:optimizationbitcost}
Let $G = (V, \init, E, \gamma)$ be a quantitative graph game. 
Let $\mu>0$ be the maximum of absolute value of costs along transitions. Let $d = \frac{p}{q}>1$ be the discount factor.
The worst-case complexity of the optimization problem under the bit-cost model of arithmetic is
\begin{enumerate}
    \item $\O(|V|^2 \cdot |E| \cdot \log p \cdot \max\{\log \mu, \log p\})$ when $d \geq 2$, 
    \item $\O\Big( \Big( \frac{\log(\mu)}{d-1} + |V|\Big)^2 \cdot |E| \cdot \log p \cdot \max\{\log \mu, \log p\}\Big)$ when $1<d < 2$.
    
%    \item $\O(|V|^4 \cdot |E| \cdot \log p \cdot \max\{\log \mu, \log p\})$ when $d \geq 2$, 
%   \item $\O(\Big(\frac{\log(\mu)}{d-1} + |V|^2\Big)^2 \cdot |E| \cdot \log p \cdot \max\{\log \mu, \log p\})$ when $1<d < 2$.
\end{enumerate}
\end{theorem}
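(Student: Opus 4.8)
The plan is to refine the unit-cost analysis of Theorem~\ref{thrm:optimizationunitcost} by replacing the assumption that each arithmetic operation costs $O(1)$ with its true bit-cost. Theorem~\ref{thrm:optimizationunitcost} already establishes that each iteration touches every transition once and performs $O(|E|)$ arithmetic operations, while Theorem~\ref{thrm:vsquareiterations} bounds the number of iterations $K$ by $\O(|V|)$ when $d\ge 2$ and by $\O(\frac{\log\mu}{d-1}+|V|)$ when $1<d<2$. The only missing ingredient is a bound on the bit-sizes of the rational numbers manipulated at iteration $k$, from which the per-iteration bit-cost, and then the total, follow by multiplication.

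The key observation I would use is that the values $\wt_k(v)$ can be stored over a \emph{common denominator} $p^{k-1}$ for all states simultaneously. Each update multiplies the previous value by $\tfrac{1}{d}=\tfrac{q}{p}$ and adds an integer cost, so starting from the integer-valued $\wt_1$, writing $\wt_k(v)=a_k(v)/p^{k-1}$ keeps every state on the same denominator with no fraction reduction. The recurrence becomes, for $v\in V_0$ (dually with $\min$ for $v\in V_1$),
\[
a_{k+1}(v)=\max\{\,\gamma(v,w)\cdot p^{k}+q\cdot a_k(w)\mid (v,w)\in E\,\}.
\]
To bound $|a_k(v)|$ I would use the discounted-sum bound $|\wt_k(v)|\le \mu\cdot\frac{d}{d-1}$, giving $|a_k(v)|\le \mu\cdot\frac{d}{d-1}\cdot p^{k-1}$. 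Since $\frac{d}{d-1}=\frac{p}{p-q}$, we have $\log\frac{d}{d-1}\le\log p$, so the numerator has bit-length $N_k=O(\log\mu+k\log p)$ and the denominator $p^{k-1}$ has bit-length $O(k\log p)$, uniformly across both regimes of $d$.

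With these size bounds the per-iteration cost is immediate. For each edge $(v,w)$ the update computes $\gamma(v,w)\cdot p^{k}$ and $q\cdot a_k(w)$; under the bit-cost model these multiplications cost $O(\log\mu\cdot k\log p)$ and $O(\log p\cdot N_k)$, which together give $O(k\log p\cdot\max\{\log\mu,\log p\})$ per edge, with the final addition of lower order. The crucial point is that because all candidates share the denominator $p^{k}$, the $\max/\min$ is taken by comparing numerators directly, at cost $O(N_{k+1})$ per comparison, rather than by cross-multiplication; this is precisely what keeps the bound tight and is dominated by the multiplication cost. Summing over all edges, iteration $k$ costs $O(|E|\cdot k\log p\cdot\max\{\log\mu,\log p\})$.

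It then remains to sum over iterations. Since $\sum_{k=1}^{K}k=O(K^2)$, the total bit-cost is $O(|E|\cdot K^2\cdot\log p\cdot\max\{\log\mu,\log p\})$. Substituting $K=\O(|V|)$ for $d\ge 2$ and $K=\O(\frac{\log\mu}{d-1}+|V|)$ for $1<d<2$ from Theorem~\ref{thrm:vsquareiterations} yields the two claimed bounds. I expect the main obstacle to be the common-denominator bookkeeping: without it one would compare reduced fractions by cross-multiplication at cost $O(N_kD_k)$ per comparison, and summing $\sum_k|E|\cdot k^2(\log p)^2$ would produce a spurious $K^3$ factor, overshooting the stated complexity. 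A secondary care point is the estimate $\log\frac{d}{d-1}\le\log p$, which is what prevents the near-unit regime $1<d<2$ from contributing an uncontrolled factor to the numerator length.
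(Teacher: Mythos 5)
Your proposal is correct and takes essentially the same route as the paper's own proof: the paper likewise keeps every value over the common denominator $p^{k-1}$ (Lemma~\ref{lem:transcostform}), bounds the per-iteration cost by $\O(|E|\cdot k\cdot \log p\cdot \max\{\log\mu,\log p\})$ using numerator-only comparisons for the $\max/\min$ since the denominators agree (Lemmas~\ref{lem:transcostcost} and~\ref{lem:costupdateiteation}), and then sums $\sum_{k=1}^{K}k=\O(K^2)$ over the iteration bound of Theorem~\ref{thrm:vsquareiterations}. The only cosmetic difference is how the numerator bit-length is bounded---the paper unrolls it inductively into a sum of $k$ terms each of magnitude at most $\mu\cdot p^{k-1}$, whereas you invoke the analytic bound $|\wt_k(v)|\le \mu\cdot\frac{d}{d-1}$ together with $\frac{d}{d-1}=\frac{p}{p-q}\le p$---but both give bit-length $\O(\log\mu+k\log p)$ and the rest of the accounting coincides.
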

\begin{proof}[Sketch]
Since arithmetic operations incur a cost and the length of representation of intermediate costs increases linearly in each iteration, we can show that  the cost of conducting the $j$-th iteration is $\O(|E|\cdot j \cdot \log \mu \cdot \log p)$. Their summation will return the given expressions.
\qed
\end{proof}

\paragraph{Remarks on integer discount factor.}
Our analysis shows that when the discount factor is an integer ($d\geq 2 $), VI requires $\Theta(|V|)$ iterations. Its worst-case complexity is, therefore, $\O(|V|\cdot |E|)$ and $\O(|V|^2\cdot |E|)$ under the unit-cost and bit-cost models for arithmetic, respectively.
From a practical point of view, the bit-cost model is more relevant since implementations of VI will use multi-precision libraries to avoid floating-point errors. While one may argue that the upper bounds in Theorem~\ref{thrm:optimizationbitcost} could be tightened, they would not improve significantly due to the $\Omega(|V|)$ lower bound on number of iterations.

\subsection{Satisficing via value-iteration}
\label{Sec:VIForSatisfice}

We present our first algorithm for the satisficing problem. It is an adaptation of VI. However, we see that it does not fare better than VI for optimization.

VI-based algorithm for satisficing is described as follows: Perform VI for optimization. Terminate as soon as one of these  occurs: (a). VI completes as many iterations from Theorem~\ref{thrm:vsquareiterations}, or (b). The threshold value falls
outside the interval defined in Lemma~\ref{lem:interval}. Either way, one can tell how the threshold value relates to the optimal cost to solve satisficing. 
Clearly, (a) needs as many iterations as optimization; (b) does not reduce
the number of iterations since it is inversely proportional to the distance between optimal cost and threshold value:
\begin{theorem}
\label{thrm:nonrobust}
Let $G = (V, \init, E, \gamma)$ be a quantitative graph game with optimal cost $\opt$. Let $v \in \Q$ be the threshold value. 
Then number of iterations taken by a VI-based algorithm for the satisficing problem is
%$\min\{O(|V|^2), \log{\frac{\mu}{|W|-v}}\}$ if $d\geq 2$ and $\min\{\O\Big(\frac{\log(\mu)}{d-1} + |V|^2\Big), \log{\frac{\mu}{|W|-v}}\}$ if $1<d<2$.
$\min\{O(|V|), \log{\frac{\mu}{|W|-v}}\}$ if $d\geq 2$ and $\min\{\O\Big(\frac{\log(\mu)}{d-1} + |V|\Big), \log{\frac{\mu}{|W|-v}}\}$ if $1<d<2$.
\end{theorem}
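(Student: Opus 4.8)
The plan is to analyze the two stopping conditions of the algorithm separately and then argue that the algorithm halts as soon as the earlier of the two fires, so that the iteration count is the minimum of the two individual counts. Condition (a) terminates after exactly the number of iterations established in Theorem~\ref{thrm:vsquareiterations}, namely $\O(|V|)$ for $d \geq 2$ and $\O\big(\frac{\log \mu}{d-1} + |V|\big)$ for $1 < d < 2$; at that point the interval of Lemma~\ref{lem:interval} is narrow enough to pin $\opt$ down as the unique admissible rational, after which a single comparison of $\opt$ against $v$ decides satisficing. The bulk of the argument is therefore the analysis of condition (b), where $|\opt - v|$ denotes the distance between the optimal cost and the threshold.

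First I would establish the \emph{soundness} of early termination under (b). Write $r_k = \frac{1}{d^{k-1}}\cdot\frac{\mu}{d-1}$ for the radius of the interval in Lemma~\ref{lem:interval}, so that $\opt \in [\wt_k(\init) - r_k,\ \wt_k(\init) + r_k]$ for every $k$. If the threshold $v$ lies strictly outside this interval, then because $\opt$ is guaranteed to lie inside it, the two quantities are separated: $v > \wt_k(\init) + r_k \geq \opt$ forces $\opt < v$, and symmetrically $v < \wt_k(\init) - r_k \leq \opt$ forces $\opt > v$. Either way the strict, and hence also the non-strict, comparison between $\opt$ and $v$ is determined, which is exactly what satisficing requires.

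Next I would count the iterations needed for (b) to fire. By the triangle inequality, $|v - \wt_k(\init)| \geq |v - \opt| - |\opt - \wt_k(\init)| \geq |v - \opt| - r_k$, so a sufficient condition for $v$ to leave the interval is $r_k < \tfrac{1}{2}|v - \opt|$, i.e. $\frac{1}{d^{k-1}}\cdot\frac{\mu}{d-1} < \tfrac{1}{2}|\opt - v|$. Solving this exponential inequality for $k$ gives $k = \O\big(\log \tfrac{\mu}{|\opt - v|}\big)$: for $d \geq 2$ the base of the logarithm is a constant and $\frac{1}{d-1} \leq 1$, so the $(d-1)$ and $2$ factors are absorbed; for $1 < d < 2$ the same inversion is carried out exactly as in the proof of Theorem~\ref{thrm:vsquareiterations}. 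Combining the two conditions, the algorithm stops after the $\min$ of the (a)-count and this (b)-count, yielding the stated bounds.

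The main obstacle is the $d$-dependence hidden in the logarithm when inverting the inequality in the regime $1 < d < 2$, where $\log_d$ contributes a $\frac{1}{\ln d}\approx\frac{1}{d-1}$ factor and the approximation of logarithms of values near $1$ must be handled exactly as in Theorem~\ref{thrm:vsquareiterations}. A second, more conceptual point to get right is that condition (b) offers \emph{no} savings when $v = \opt$, or is extremely close to it, since then $v$ never leaves the shrinking interval; this is precisely why the bound must be a minimum with the (a)-count rather than (b) alone, and it also explains the non-robustness asserted by the theorem's name, as the running time degrades as the threshold approaches the optimal cost.
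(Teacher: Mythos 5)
Your proposal is correct and follows essentially the same route as the paper, which in fact states this theorem with only a prose sketch (terminate on condition (a) after the Theorem~\ref{thrm:vsquareiterations} count, or on condition (b) whose iteration count is inversely proportional to $|\opt - v|$); your soundness argument for early termination and the inversion of $\frac{\mu}{(d-1)\cdot d^{k-1}} < \frac{1}{2}|\opt - v|$ are exactly the details the paper leaves implicit. Your closing observations --- that the $\frac{1}{d-1}$ factor from $\log_d$ must be handled as in Theorem~\ref{thrm:vsquareiterations}, and that (b) never fires when $v = \opt$, which is why the bound is a minimum and why the algorithm is non-robust --- match the paper's intent.
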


Observe that this bound is tight since the lower bounds from optimization apply here as well. The worst-case complexity can be completed using similar computations from \textsection~\ref{Sec:worstcasecomplexityVI}. Since, the number of iterations is identical to Theorem~\ref{thrm:vsquareiterations}, the worst-case complexity will be identical to Theorem~\ref{thrm:optimizationunitcost} and Theorem~\ref{thrm:optimizationbitcost}, showing no theoretical improvement. However, its implementations may terminate soon for threshold values far from the optimal but it will retain worst-case behavior for ones closer to the optimal. 
The catch is since the optimal cost is unknown apriori, this leads to a highly variable and {\em non-robust} performance. 
\section{Satisficing via Comparators}
\label{Sec:Satisfice}

Our second algorithm for satisficing is purely based on automata-methods. While this approach operates with integer discount factors only, it runs linearly in the size of the quantitative game. This is lower than the number of iterations required by VI, let alone the worst-case complexities of VI. 
This approach reduces satisficing to solving a safety or reachability game using comparator automata. 

The intuition is as follows: Given threshold value $v\in\Q$ and relation $\R$, let the satisficing problem be to ensure cost of plays relates to $v$ by $\R$. Then, a play $\rho$ is {\em winning for satisficing with $v$ and $\R$} if its cost sequence $A$ satisfies $\DSum{A}{d}$ $\R$ $v$, where $d>1$ is the discount factor. 
When $d$ is an integer and $v=0$, this simply checks if $A$ is in the safety/co-safety comparator, hence yielding the reduction. 

The caveat is the above applies to $v=0$ only. To overcome this, we extend the theory of comparators to permit arbitrary threshold values $v\in\Q$. We find that results from $v=0$ transcend to $v\in\Q$, and offer compact comparator constructions (\textsection~\ref{Sec:Comparator}). These new comparators are then used to reduce  satisficing to develop an efficient and scalable  algorithm (\textsection~\ref{Sec:Reduction}). Finally, to procure a well-rounded view of its performance, we conduct an empirical evaluation where we see this comparator-based approach outperform the VI approaches  \textsection~\ref{Sec:Implementation}.

\subsection{Foundations of comparator automata with threshold $v \in \Q$}
\label{Sec:Comparator}

This section extends the existing literature on comparators with threshold value $v=0$~\cite{BCVFoSSaCS18,BCVCAV18,BVCAV19} to permit non-zero thresholds. The properties we investigate are of safety/co-safety and $\omega$-regularity. We begin with formal definitions:

%For an integer upper bound $\mu>0$, discount factor $d >1$, and equality or inequality relation $\mathsf{R} \in \{<,>,\leq, \geq, =, \neq\}$the {\em DS comparison language with upper bound $\mu$, relation $\mathsf{R}$, and discount factor $d$} is the language of infinite words over the alphabet $\Sigma = \{-\mu, \dots, \mu\}$ that accepts  $A \in \Sigma^{\omega}$ iff $\DSum{A}{d}$ $\mathsf{R}$ $0$ holds~\cite{BVCAV19}. The {\em DS comparator automata  with upper bound $\mu$, relation $\mathsf{R}$, and discount factor $d$} is the automaton that accepts the DS comparison language with the same parameters~\cite{BCVFoSSaCS18}.Prior work has shown that a DS comparison language is a safety language if $\R \in \{\leq, \geq, =\}$, and a co-safety language if $\R \in \{<, >, \neq\}$~\cite{BVCAV19}.
%DS comparison language is said to be {\em $\omega$-regular} if its corresponding automaton is a B\"uchi automaton. It is known that DS comparison languages are $\omega$-regular iff the discount factor $d>1$ is an integer~\cite{BCVlmcs2019}.

\begin{definition}[Comparison language with threshold $v \in \Q$]
For an integer upper bound $\mu>0$, discount factor $d >1$, equality or inequality relation $\mathsf{R} \in \{<,>,\leq, \geq, =, \neq\}$, and a threshold value $v \in \Q$ 
the {\em comparison language with upper bound $\mu$, relation $\mathsf{R}$, discount factor $d$ and threshold value $v$} is a language of infinite words over the alphabet $\Sigma = \{-\mu, \dots, \mu\}$ that accepts  $A \in \Sigma^{\omega}$ iff $\DSum{A}{d}$ $\mathsf{R}$ $v$ holds. 
\end{definition}

\begin{definition}[Comparator automata with threshold $v \in \Q$]
For an integer upper bound $\mu>0$, discount factor $d >1$, equality or inequality relation $\mathsf{R} \in \{<,>,\leq, \geq, =, \neq\}$, and a threshold value $v \in \Q$ 
the {\em comparator automata with upper bound $\mu$, relation $\mathsf{R}$, discount factor $d$ and threshold value $v$} is an automaton that accepts the DS comparison language with upper bound $\mu$, relation $\mathsf{R}$, discount factor $d$ and threshold value $v$.
\end{definition}

% Prove that DS comparison language is safety/co-safety

\subsubsection{Safety and co-safety of comparison languages.}
\label{Sec:safetycosafety}

The primary observation is that to determine if $\DSum{A}{d}$ $\R$ $v$ holds, it should be sufficient to examine finite-length prefixes of $A$ since weights later on get heavily discounted. Thus,
\begin{theorem}
\label{Thrm:DSFull}
%[Safety and co-safety properties of DS-comparators]
Let $\mu>1$ be the integer upper bound.  
For arbitrary discount factor $d>1$ and threshold value $v\in\Q$
\begin{enumerate}
\item {\label{Item:Safety}}  Comparison languages are safety languages for relations $R \in \{\leq, \geq, =\}$.
\item {\label{Item:CoSafety}} Comparison language are co-safety languages for relations $R \in \{<, >, \neq\}$.
\end{enumerate}
\end{theorem}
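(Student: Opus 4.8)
The plan is to make the paper's informal intuition precise: because weights decay geometrically, any single violation of the relation $\R$ is already forced by a sufficiently long finite prefix. The central tool is a tail bound, which I would isolate as a standalone claim since both parts of the theorem rest on it. Write $A = w_0 w_1 \dots$ and let $S_n = \sum_{i=0}^{n-1} w_i/d^i$ be the contribution of the first $n$ letters. For any extension $y \in \Sigma^\omega$, the word $w_0\dots w_{n-1}\cdot y$ has discounted sum $S_n + \frac{1}{d^n}\DSum{y}{d}$, and since every letter lies in $\{-\mu,\dots,\mu\}$ we have $|\DSum{y}{d}| \le \frac{\mu d}{d-1}$. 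Hence every extension of the length-$n$ prefix has discounted sum inside the closed interval $[\,S_n - T_n,\; S_n + T_n\,]$, where $T_n = \frac{\mu}{d^{n-1}(d-1)}$. The crucial features are that $T_n \to 0$ while $S_n \to \DSum{A}{d}$; note the estimate never uses integrality of $d$, matching the theorem's ``arbitrary $d>1$''.

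With the interval bound in hand I would prove the safety cases $\R \in \{\leq,\geq,=\}$ directly from the definition of a bad prefix. Take $\R$ to be $\leq$ and suppose $A \notin \L$, i.e. $\DSum{A}{d} > v$; set $\epsilon = \DSum{A}{d} - v > 0$. Since $S_n - T_n \ge \DSum{A}{d} - 2T_n$ and $T_n \to 0$, there is an $n$ with $S_n - T_n > v$. For that $n$, every extension of $w_0\dots w_{n-1}$ has discounted sum at least $S_n - T_n > v$, so no extension is accepted and the prefix is a bad prefix; thus $\L$ is safety. The case $\R = \geq$ is the mirror image, using $S_n + T_n < v$ for large $n$. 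For $\R =\, =$ I would either observe $\L_{=} = \L_{\leq}\cap\L_{\geq}$ and invoke closure of safety languages under intersection, or note directly that $A \notin \L_{=}$ means $\DSum{A}{d} \neq v$, which splits into the two strict subcases already handled, each producing a prefix all of whose extensions stay strictly on one side of $v$.

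For the co-safety cases $\R \in \{<,>,\neq\}$ I would appeal to the definition of co-safety as the complement of a safety language. The language for $<$ is the complement of the language for $\geq$, the language for $>$ is the complement of the one for $\leq$, and the language for $\neq$ is the complement of the one for $=$. Since each complemented relation was just shown to yield a safety language, the three remaining relations yield co-safety languages, which closes the theorem.

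I do not expect a genuine obstacle; the one point requiring care is the limit argument, where one must keep the strict gap $\epsilon = |\DSum{A}{d} - v|$ fixed so that the vanishing tail $T_n$ can be absorbed below it. The bound $|\DSum{y}{d}| \le \frac{\mu d}{d-1}$ and the resulting $T_n = \frac{\mu}{d^{n-1}(d-1)}$ are routine geometric-series estimates, but they are exactly what turns the ``examine only a finite prefix'' intuition into a proof, so I would state them explicitly rather than gesture at them.
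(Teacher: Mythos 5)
Your proposal is correct and follows essentially the same route as the paper: both rest on the geometric tail bound $|\DSum{y}{d}| \le \frac{\mu d}{d-1}$, which forces the discounted sums of all extensions of a length-$n$ prefix into an interval of radius $T_n \to 0$ around the prefix sum, and both obtain the co-safety relations as complements of the safety ones. The only difference is presentational: you exhibit the bad prefix directly by choosing $n$ with $S_n - T_n > v$, whereas the paper's appendix proof runs the identical estimate as a contradiction argument, assuming a word violating $\leq$ has no bad prefix and deriving a contradiction from the convergence $\lim_{i\to\infty}\DSum{W[i]\cdot Y^i}{d} = \DSum{W}{d}$.
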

\begin{proof}
The proof is identical to that for threshold value $v = 0$ from~\cite{BVCAV19}. \qed
\end{proof}

% Prove that DS comparator is \omega regualr for integer discount factors

\subsubsection{Regularity of comparison languages.}
\label{Sec:regularity}

Prior work on threshold value $v=0$ shows that a comparator is $\omega$-regular iff the discount factor is an integer~\cite{BCVlmcs2019}. We show the same result for arbitrary threshold values $v\in\Q$.

First of all, trivially, comparators with arbitrary threshold value are not $\omega$-regular for non-integer discount factors, since that already holds when $v = 0$.

The rest of this section proves $\omega$-regularity with arbitrary threshold values for integer discount factors. But first, let us introduce some notations: Since $v \in \Q$, w.l.o.g. we assume that the it has an $n$-length representation $v = v[0]v[1]\dots v[m](v[m+1]v[m+2]\dots v[n])^{\omega}$. By abuse of notation, we denote both the expression $ v[0]v[1]\dots v[m](v[m+1]v[m+2]\dots v[n])^{\omega}$ and the value $\DSum{v[0]v[1]\dots v[m](v[m+1]v[m+2]\dots v[n])^{\omega}}{d}$ by $v$.

We will construct a B\"uchi automaton for the comparison language $\L_\leq$ for
relation $\leq$, threshold value $v \in \Q$ and an integer discount factor.  
This is sufficient to prove $\omega$-regularity for all relations since B\"uchi automata are closed. 

From safety/co-safety of comparison languages, we argue it is sufficient to examine the discounted-sum of finite-length weight sequences to know if their infinite extensions will be in $\L_\leq$.
For instance, if the discounted-sum of a finite-length weight-sequence $W$ is {\em very large}, $W$ could be a bad-prefix of $\L_\leq$. 
%In this case, the discounted-sum of $W$ will be so large that for all infinite-length bounded extensions $Y$, $\DSum{W\cdot Y}{d} >v$. 
Similarly, if the discounted-sum of a finite-length weight-sequence $W$ is {\em very small} then for all of its infinite-length bounded extensions $Y$, $\DSum{W\cdot Y}{d} \leq v$. Thus, a mathematical characterization of {\em very large} and {\em very small} would formalize a criterion for membership of sequences in $\L_\leq$ based on their finite-prefixes. 

To this end, we use the concept of a {\em recoverable gap} (or gap value), which is a measure of distance of the discounted-sum of a finite-sequence from 0~\cite{boker2014exact}. 
The recoverable gap of a finite weight-sequences $W$ with discount factor $d$, denoted $\gap{W}$, is defined as follows: If $W = \varepsilon$ (the empty sequence), $\gap{\varepsilon} = 0$, and $\gap{W} = d^{|W|-1}\cdot \DSum{W}{d}$ otherwise. Then, Lemma~\ref{Lem:GapThreshold} formalizes {\em very large} and {\em very small} in Item~\ref{verylarge} and Item~\ref{verysmall}, respectively, w.r.t. recoverable gaps. As for notation, given a sequence $A$, let $A[\dots i]$ denote its $i$-length prefix:

\begin{lemma}
	\label{Lem:GapThreshold}
	Let $\mu>0$ be the integer upper bound,  $d >1$ be the discount factor.
	Let $v \in \Q$ be the threshold value such that 
	 $v = v[0]v[1]\dots v[m](v[m+1]v[m+2]\dots v[n])^{\omega}$.
	Let $W$ be a non-empty, bounded, finite-length weight-sequence.   
	\begin{enumerate}
	
	    \item \label{verylarge}  
        $\gap{W-\pre{v}{|W|}} >  \frac{1}{d}\cdot \DSum{\post{v}{|W|}}{d} + \frac{\mu}{d-1}$.
	    iff for all infinite-length, bounded extensions $Y$, $\DSum{W\cdot Y}{d} > v$ 
	    
        \item \label{verysmall} 
	    $\gap{W-\pre{v}{|W|}} \leq \frac{1}{d}\cdot \DSum{\post{v}{|W|}}{d} - \frac{\mu}{d-1}$
	    iff For all infinite-length, bounded extensions $Y$, $\DSum{W\cdot Y}{d}\leq v$

	\end{enumerate}
\end{lemma}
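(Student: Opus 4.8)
The plan is to convert both statements into a single algebraic inequality that compares the gap of $W - \pre{v}{|W|}$ against the discounted sum of a worst-case bounded extension, and then to identify that worst case explicitly. Write $k = |W|$. First I would split the discounted sum at position $k$ using its additivity: for any infinite extension $Y$, $\DSum{W\cdot Y}{d} = \DSum{W}{d} + \frac{1}{d^k}\DSum{Y}{d}$, and likewise the assumed representation of $v$ gives $v = \DSum{\pre{v}{k}}{d} + \frac{1}{d^k}\DSum{\post{v}{k}}{d}$. Subtracting these and using linearity of the discounted sum to combine $\DSum{W}{d} - \DSum{\pre{v}{k}}{d}$ into $\DSum{W - \pre{v}{k}}{d}$ yields
\[
\DSum{W\cdot Y}{d} - v = \DSum{W-\pre{v}{k}}{d} + \frac{1}{d^k}\big(\DSum{Y}{d} - \DSum{\post{v}{k}}{d}\big).
\]

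Next I would eliminate the leading discount factor through the definition of the gap. Since $|W - \pre{v}{k}| = k$, we have $\DSum{W-\pre{v}{k}}{d} = \frac{1}{d^{k-1}}\gap{W-\pre{v}{k}}$. Multiplying the identity above by the positive quantity $d^{k-1}$ (which preserves the sign of the difference, hence the relation between $\DSum{W\cdot Y}{d}$ and $v$) gives
\[
d^{k-1}\big(\DSum{W\cdot Y}{d} - v\big) = \gap{W-\pre{v}{k}} - \frac{1}{d}\DSum{\post{v}{k}}{d} + \frac{1}{d}\DSum{Y}{d}.
\]
Thus the sign of $\DSum{W\cdot Y}{d} - v$ is controlled entirely by whether $\gap{W-\pre{v}{k}}$ exceeds $\frac{1}{d}\DSum{\post{v}{k}}{d} - \frac{1}{d}\DSum{Y}{d}$.

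The final ingredient is to discharge the universal quantifier over bounded extensions $Y$. Since every letter of $Y$ lies in $\{-\mu,\dots,\mu\}$, the value $\DSum{Y}{d}$ is minimized by the constant sequence $(-\mu)^\omega$ and maximized by $(\mu)^\omega$, attaining $\mp\mu\cdot\frac{d}{d-1}$ respectively; dividing by $d$ gives $\mp\frac{\mu}{d-1}$. For Item~\ref{verylarge}, $\DSum{W\cdot Y}{d} > v$ for all $Y$ iff it holds for the minimizing extension $(-\mu)^\omega$, which substitutes $\frac{1}{d}\DSum{Y}{d} = -\frac{\mu}{d-1}$ and reproduces the stated strict inequality $\gap{W-\pre{v}{|W|}} > \frac{1}{d}\DSum{\post{v}{|W|}}{d} + \frac{\mu}{d-1}$. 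Symmetrically, for Item~\ref{verysmall}, $\DSum{W\cdot Y}{d} \leq v$ for all $Y$ iff it holds for the maximizing extension $(\mu)^\omega$, yielding the non-strict inequality of the claim.

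The main obstacle is a bookkeeping point rather than a conceptual one: matching the direction of the quantifier with the strictness of the relation. Because the extremal extensions $(-\mu)^\omega$ and $(\mu)^\omega$ are themselves valid bounded extensions, the relevant suprema and infima over $Y$ are in fact attained; this is precisely what allows me to replace the universally quantified statement ``for all $Y$, strict (resp.\ non-strict) relation'' by a single relation against the attained extremum, without inadvertently shifting between $>$ and $\geq$. I would state this equivalence carefully once and then apply it to both items.
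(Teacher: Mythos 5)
Your proof is correct and follows essentially the same route as the paper's: both split $\DSum{W\cdot Y}{d}$ at position $|W|$, rewrite the difference against $v$ via the recoverable gap of $W-\pre{v}{|W|}$, and discharge the quantifier over $Y$ using the extremal bounded extensions $(\pm\mu)^\omega$, whose discounted sums $\pm\frac{\mu\cdot d}{d-1}$ produce exactly the $\pm\frac{\mu}{d-1}$ terms in the statement. Your write-up is merely more unified — one identity serving both items and both directions, with the attainment of the extrema made explicit — whereas the paper runs the same manipulations as separate implication chains for each direction.
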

\begin{proof}
\sloppy
We present the proof of one direction of Item~\ref{verylarge}. The others follow similarly. 
Let $W$ be s.t. for every infinite-length, bounded extension $Y$,   $\DSum{W\cdot Y}{d}$ $> v$ holds. Then $ \DSum{W}{d} + \frac{1}{d^{|W|}}\cdot \DSum{Y}{d}$ $\geq$ $\DSum{\pre{v}{|W|}\cdot\post{v}{|W|}}{d}$  implies $\DSum{W}{d} - \DSum{\pre{v}{|W|}}{d}$ $>$ $\frac{1}{d^{|W|}}\cdot (\DSum{\post{v}{|W|}}{d} - \DSum{Y}{d})$ implies $\gap{W-\pre{v}{|W|}}$ $>$ $\frac{1}{d} (\DSum{\post{v}{|W|}}{d} + \frac{\mu\cdot d}{d-1})$.
\qed
\end{proof}

This segues into the state-space of the B\"uchi automaton. We define the state space so that state $s$ represents the gap value $s$. The idea is that all finite-length weight sequences with gap value $s$ will terminate in state $s$.
To assign transition between these states,  we observe that  gap value is defined inductively as follows: $\gap{\varepsilon} = 0$ and $\gap{W\cdot w} = d\cdot \gap{W} + w$, where  $w \in \{-\mu,\dots,\mu\}$. Thus there is a transition from state $s$ to state $t$ on $a \in \{-\mu,\dots,\mu\}$ if $t = d\cdot s + a$. 
Since $\gap{\varepsilon} = 0$,  state 0 is assigned to be the initial state.

The issue with this construction is it has infinite states. To limit that, we use Lemma~\ref{Lem:GapThreshold}. Since Item~\ref{verylarge} is a necessary and sufficient criteria for bad prefixes of safety language $\L_\leq$,  all states with value larger than Item~\ref{verylarge} are fused into one non-accepting sink. For the same reason, all states with gap value less than Item~\ref{verylarge} are accepting states. Due to Item~\ref{verysmall}, all states with value less than Item~\ref{verysmall} are fused into one accepting sink. 
Finally, since $d$ is an integer, gap values are integral. Thus, there are only finitely many states between Item~\ref{verysmall} and Item~\ref{verylarge}.

\begin{theorem} 
\label{thrm:regularWithVal}
Let $\mu>0$ be an integer upper bound,  $d >1$ an integer discount factor,  $\R $ an equality or inequality relation, and
$v \in \Q$ the threshold value with an $n$-length representation given by
 $v = v[0]v[1]\dots v[m](v[m+1]v[m+2]\dots v[n])^{\omega}$.
\begin{enumerate} 
    \item {\label{regular}}The DS comparator automata for $\mu,d,\R,v$ is $\omega$-regular iff $d$ is an integer. 
    \item {\label{safety}}For integer discount factors, the DS comparator is a safety or co-safety automaton with $\O(\frac{\mu\cdot n}{d-1})$ states.
\end{enumerate}
\end{theorem}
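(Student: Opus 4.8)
The plan is to prove Item~\ref{safety} by an explicit automaton construction and then derive Item~\ref{regular} from it. For the ``only if'' direction of Item~\ref{regular}, the threshold $v=0$ is a special case of an arbitrary $v\in\Q$, so the failure of $\omega$-regularity for non-integer $d$ transfers verbatim from the $v=0$ result of~\cite{BCVlmcs2019}; nothing new is needed there. The ``if'' direction comes for free once Item~\ref{safety} is established, since a finite safety or co-safety automaton is in particular a B\"uchi automaton, and hence its language is $\omega$-regular. Thus the entire substance lies in Item~\ref{safety}.

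For Item~\ref{safety}, by Theorem~\ref{Thrm:DSFull} the comparison language is safety for $\R\in\{\leq,\geq,=\}$ and co-safety for $\R\in\{<,>,\neq\}$, so it suffices to exhibit, for each relation, a deterministic automaton with $\O(\frac{\mu\cdot n}{d-1})$ states; I would present $\R={\leq}$ in full as the representative case, the other relations differing only in which out-of-window direction is declared a sink and in the acceptance flag (and for $<,>$ by simply complementing a deterministic safety automaton, which swaps the acceptance condition and preserves the state count). Following the sketch preceding the theorem, a state tracks the recoverable gap $g=\gap{W-\pre{v}{|W|}}$ after reading a prefix $W$, which evolves deterministically by $g'=d\cdot g+(a-v[i])$ on reading weight $a$ at position $i$. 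The essential point I must emphasize is that both the subtracted digit $v[i]$ and the live-window center $\frac{1}{d}\DSum{\post{v}{i}}{d}$ from Lemma~\ref{Lem:GapThreshold} depend on the position $i$ in the eventually-periodic representation of $v$; a state must therefore record the pair (integer gap $g$, position $i$), where the position component is a plain $\O(n)$-state cycle consisting of the $m+1$ transient digits followed by the length-$(n-m)$ period.

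The finiteness and the state count then follow from Lemma~\ref{Lem:GapThreshold}: I fuse all states whose gap exceeds the ``very large'' bound (Item~\ref{verylarge}) into one rejecting sink, fuse all states whose gap lies below the ``very small'' bound (Item~\ref{verysmall}) into one accepting sink, and make every remaining (live) state accepting, yielding a safety automaton. The key quantitative observation is that for each fixed position $i$ the two thresholds differ by exactly $\frac{2\mu}{d-1}$, a width \emph{independent} of the (possibly large) digits of $v$, because the position-dependent center $\frac{1}{d}\DSum{\post{v}{i}}{d}$ cancels. Since $d$ is an integer and the digits $a_j,v[j]$ are integral, $g=\sum_j (a_j-v[j])\,d^{|W|-1-j}$ is an integer, so each position admits at most $\lfloor\frac{2\mu}{d-1}\rfloor+1=\O(\frac{\mu}{d-1})$ live gap values; multiplying by the $\O(n)$ positions and adding the two sinks gives $\O(\frac{\mu\cdot n}{d-1})$ states. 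Correctness -- that the automaton recognizes exactly $\L_{\leq v}$ -- I would obtain by combining the safety property of Theorem~\ref{Thrm:DSFull} (a word is rejected iff it has a bad prefix) with the ``iff for all extensions'' characterization of bad and very-good prefixes in Lemma~\ref{Lem:GapThreshold}.

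The hard part will be the bookkeeping forced by the eventually-periodic threshold: one must verify that isolating the position-dependent center genuinely leaves a live window of $v$-independent width $\frac{2\mu}{d-1}$, and that the gap stays integral so that only finitely many states survive per position. A secondary, lighter point is arguing that collapsing out-of-window states into the two sinks is sound -- that once a prefix is very large (resp. very small) every extension remains on the same side of $v$ -- which is precisely the biconditional form of Lemma~\ref{Lem:GapThreshold} and so requires no fresh argument.
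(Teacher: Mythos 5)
Your proposal is correct and follows essentially the same route as the paper: an explicit deterministic automaton whose states are (integer gap value, position-in-representation) pairs evolving by $g' = d\cdot g + (a - v[i])$, with the two bounds of Lemma~\ref{Lem:GapThreshold} collapsed into a rejecting and an accepting sink, the $\O(\frac{\mu}{d-1})$-wide live window per position times $\O(n)$ positions giving the state count, the other relations handled by flipping acceptance on the deterministic safety automaton, and the non-integer direction inherited from the $v=0$ case. Your reorganization (proving Item~\ref{safety} first and reading off Item~\ref{regular} as a corollary, rather than the paper's reverse order) is only a cosmetic difference.
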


\begin{proof}
To prove Item~\ref{regular} we present the construction of an $\omega$-regular comparator automaton for integer upper bound $\mu>0$, integer discount factor $d>1$, inequality relation $\leq$, and threshold value $v \in \Q$ s.t. $v = v[0]v[1]\dots v[m](v[m+1]v[m+2]\dots v[n])^{\omega}$. 
, denoted by $\A=(\State, \Start, \Sigma, \delta, \Final)$ where:

For $i \in \{0,\dots,n\}$, let $\mathsf{U}_i = \frac{1}{d}\cdot \DSum{\post{v}{i}}{d} + \frac{\mu}{d-1}$ (Lemma~\ref{Lem:GapThreshold}, Item~\ref{verylarge})

For $i \in \{0,\dots,n\}$, let $\mathsf{L}_i = \frac{1}{d}\cdot \DSum{\post{v}{i}}{d} - \frac{\mu}{d-1}$ 
(Lemma~\ref{Lem:GapThreshold}, Item~\ref{verysmall})

\begin{itemize}
\item States $\State = \bigcup_{i=0}^n S_i \cup \{\mathsf{bad}, \mathsf{veryGood}\}$ where 
$S_i = \{(s, i) | s \in \{\floor{\mathsf{L}_i}+1, \dots , \floor{\mathsf{U}_i} \}\} $

\item Initial state $\Start = (0,0)$, Accepting states $\Final = S\setminus\{\mathsf{bad}\}$
\item Alphabet $\Sigma = \{-\mu, -\mu+1,\dots, \mu-1, \mu\}$
\item Transition function $\delta\subseteq \State \times \Sigma \rightarrow \State$ where $(s,a,t) \in \delta$ then:
	\begin{enumerate}
	\item \label{Trans:SelfLoop} If $s \in \{\mathsf{bad}, \mathsf{veryGood}\}$, then $t = s$ for all $a \in \Sigma$
	\item If $s$ is of the form $(p,i)$, and $a \in \Sigma$
		\begin{enumerate}
		\item \label{Trans:leq} If $d\cdot p+a-v[i] > \floor{\mathsf{U}_i}$, then $t = \mathsf{bad}$
		\item \label{Trans:geq} If $d\cdot p+a -v[i]\leq \floor{\mathsf{L}_i}$, then $t = \mathsf{veryGood}$
		\item \label{Trans:IntState} If $\floor{\mathsf{L}_i} <  d\cdot p+a -v[i] \leq \floor{\mathsf{U}_i}$, 
		\begin{enumerate}
		    \item If $i == n$, then $t = (d\cdot p+a - v[i], m+1)$
		    \item Else, $t = (d\cdot p+a - v[i], i+1)$
		\end{enumerate}
%		\item \label{Trans:IntState} If $\floor{-\thresh} +1 \leq d\cdot s+a \leq \floor{\thresh}$, then $t = d\cdot s+a$
		\end{enumerate}
	\end{enumerate}
\end{itemize}
We skip proof of correctness as it follows from the above discussion. Observe, $\A$ is  deterministic. It is a safety automaton as all non-accepting states are sinks.

To prove Item~\ref{safety},  observe that since the comparator for $\leq$ is a deterministic safety automaton, the comparator for $>$ is obtained by simply flipping the accepting and non-accepting states. This is a co-safety automaton of the same size. One can argue similarly for the remaining relations.
\qed
\end{proof}

\subsection{Satisficing via safety and reachability games}
\label{Sec:Reduction}

This section describes our comparator-based linear-time algorithm for satisficing for integer discount factors.

As described earlier, given discount factor $d>1$, a play is winning for satisficing with threshold value $v\in\Q$ and relation $\R$ if its cost sequence $A$ satisfies $\DSum{A}{d}$ $\R$ $v$. We now know from Theorem~\ref{thrm:regularWithVal}, that the winning condition for plays can be expressed as a safety or co-safety automaton for any $v \in\Q$ as long as the discount factor is an integer. Therefore, a {\em synchronized product} of the quantitative game with the safety or co-safety comparator denoting the winning condition completes the reduction to a safety or reachability game, respectively.

\begin{theorem}
\label{thrm:satisficingcomparator}
Let $G = (V, \init, E, \gamma)$ be a quantitative game,  $d >1$ the integer discount factor,  $\R$ the equality or inequality relation, and
$v \in \Q$ the threshold value with an $n$-length representation. Let $\mu>0$ be the maximum of absolute values of costs along transitions in $G$. Then,
\begin{enumerate}
    \item The satisficing problem reduces to solving a safety game if $\R \in \{\leq, \geq\}$ 
    \item The satisficing problem reduces to solving a reachability game if  $\R \in \{<,>\}$ 
    \item The satisficing problem is solved in $\O((|V| + |E|)\cdot \mu \cdot n)$ time. 
\end{enumerate}
\end{theorem}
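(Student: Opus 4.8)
The plan is to leverage Theorem~\ref{thrm:regularWithVal} directly, since it already establishes that the winning condition for satisficing is a safety or co-safety automaton of bounded size whenever $d$ is an integer. First I would recall that a play $\rho$ is winning for satisficing with $v$ and $\R$ precisely when its cost sequence $A$ satisfies $\DSum{A}{d}\ \R\ v$. By Theorem~\ref{thrm:regularWithVal}, Item~\ref{safety}, there is a comparator automaton $\A_{\R,v}$ recognizing exactly these cost sequences: it is a \emph{safety} automaton when $\R \in \{\leq, \geq, =\}$ and a \emph{co-safety} automaton when $\R \in \{<, >, \neq\}$, with $\O(\frac{\mu\cdot n}{d-1})$ states. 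For integer $d \geq 2$ this is $\O(\mu\cdot n)$ states.

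Next I would make the synchronized product precise. The comparator reads symbols from $\Sigma = \{-\mu,\dots,\mu\}$, which are exactly the possible transition costs $\gamma(e)$ in $G$ (this is where the hypothesis that $\mu$ is the maximum absolute cost is used). I would define a product game $G \times \A_{\R,v}$ whose states are pairs $(v_G, s)$ with $v_G \in V$ and $s$ a comparator state, inheriting the player-partition from the $V$-component so the players' roles are unchanged. A transition $(v_G, s) \to (v_G', s')$ exists whenever $(v_G, v_G') \in E$ and $(s, \gamma(v_G, v_G'), s') \in \delta_{\A}$; since $\A_{\R,v}$ is deterministic, each game move induces a unique comparator move, so the product has $\O(|V|\cdot \mu\cdot n)$ states and $\O(|E|\cdot \mu\cdot n)$ edges. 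The accepting/rejecting designation of the comparator lifts to the winning condition on the product: for safety relations the minimizing (or maximizing) player wins by keeping the play inside accepting comparator states, yielding a \emph{safety game}; for co-safety relations winning requires reaching a very-good comparator state, yielding a \emph{reachability game}. This proves items~(1) and~(2).

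For item~(3), I would invoke the standard fact, already recalled in the preliminaries, that safety and reachability games are solved in time linear in the size of the arena, i.e.\ $\O(|V_{\mathrm{prod}}| + |E_{\mathrm{prod}}|)$. Substituting the product bounds gives $\O((|V| + |E|)\cdot \mu \cdot n)$, as claimed. The crux of the argument is the correctness of the reduction: I must verify that a winning strategy in the product game projects to a satisficing strategy in $G$ and conversely, which follows from the comparator accepting \emph{exactly} the cost sequences satisfying $\DSum{A}{d}\ \R\ v$ together with the determinism of $\A_{\R,v}$ (so strategies transfer without the opponent gaining extra choices).

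The main obstacle I anticipate is the bookkeeping around the product construction rather than anything conceptually deep: one must confirm that the product preserves the turn-based structure (player ownership follows the $G$-component only), that determinism of the comparator is genuinely needed so that the strategy correspondence is bijective, and that the safety-versus-reachability dichotomy matches the safety-versus-co-safety dichotomy of Theorem~\ref{thrm:regularWithVal} correctly for each relation. The size accounting for item~(3) is routine once the product dimensions are pinned down.
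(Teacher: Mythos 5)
Your proposal matches the paper's own proof essentially step for step: it invokes Theorem~\ref{thrm:regularWithVal} to obtain the deterministic safety/co-safety comparator, forms the synchronized product with $G$ over transition costs (inheriting the player partition from $G$ and using determinism to bound the product at $\O(|V|\cdot\mu\cdot n)$ states and $\O(|E|\cdot\mu\cdot n)$ edges), and concludes by linear-time solving of the resulting safety or reachability game. The only difference is cosmetic: you spell out the strategy-transfer correctness that the paper dismisses as a ``standard synchronized product argument,'' which is a fine addition but not a different route.
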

\begin{proof}

The first two points use a standard synchronized product argument on the following formal reduction~\cite{colcombet2019universal}:
Let $G = (V  = V_0 \uplus V_1, \init, E, \gamma)$ be a quantitative game,  $d >1$ the integer discount factor,  $\R$ the equality or inequality relation, and
$v \in \Q$ the threshold value with an $n$-length representation. Let $\mu>0$ be the maximum of absolute values of costs along transitions in $G$.
Then, the first step is to construct the safety/co-safety comparator $\mathcal{A} = (\State, \Start, \Sigma, \delta, \Final)$ for $\mu$, $d$, $\R$ and $v$. 
The next is to synchronize the product of $G$ and $\A$ over weights to construct
the game $\GA = (W = W_0 \cup W_1, s_0 \times \mathsf{init}, \delta_W, \F_W)$, where
\begin{itemize}
    \item $W = V \times S$. In particular, $W_0 = V_0 \times S$ and $W_1 = V_1 \times S$. 
    Since $V_0$ and $V_1$ are disjoint, $W_0$ and $W_1$ are disjoint too. 
    
    \item Let $ s_0 \times \mathsf{init}$ be the initial state of $\GA$.
    
    \item Transition relation $\delta_W = W \times W$ is defined such that transition $((v,s), (v', s')) \in \delta_W$ synchronizes between transitions $(v, v') \in \delta$ and $(s, a, s') \in \delta_C$  if $a = \gamma((v, v'))$ is the cost of transition in $G$.

    \item  $\F_W = V \times \F$. 
    The game is a safety game if the comparator is a safety automaton and a reachability game if the comparator is a co-safety automaton. 
\end{itemize}

We need the size of $ \GA$ to analyze the worst-case complexity. Clearly, $\GA$ consists of $\O(|V|\cdot\mu\cdot n)$ states.
To establish the number of transitions in $\GA$, observe that every state  $(v, s)$ in $\GA$ has the same number of outgoing edges as state $v$ in $G$ because the  comparator $\A$ is deterministic. Since $\GA$ has  $\O(\mu\cdot n)$ copies of every state $v\in G$, there are a total of $\O(|E|\cdot \mu\cdot n)$ transitions in $\GA$. Since $\GA$ is either a safety or a reachability game, it is solved in linear-time to its size. Thus, the overall complexity is $\O((|V| + |E|)\cdot \mu \cdot n)$.
\qed
\end{proof}

With respect to the value $\mu$, the VI-based solutions are logarithmic in the worst case, while comparator-based solution is linear due to the size of the comparator. From a practical perspective, this may not be a limitation since weights along transitions can be scaled down. The parameter that cannot be altered is the size of the quantitative game. With respect to that, the comparator-based solution displays  clear superiority.  Finally, the comparator-based solution is affected by $n$, length of the representation of the threshold value while the VI-based solution does not. It is natural to assume that the value of $n$ is small.

\subsection{Implementation and Empirical Evaluation}
\label{Sec:Implementation}

The goal of the empirical analysis is to determine whether the practical performance of these algorithms resonate with our theoretical discoveries.

For an apples-to-apples comparison, we implement three algorithms:
(a) $\optimize$: Optimization via value-iteration,
(b)$\visatisfice$: Satisficing via value-iteration, and
(c). $\compSatisfice$: Satisficing via comparators. 
All tools have been implemented in \textsf{C++}. 
To avoid floating-point errors in $\optimize$ and $\visatisfice$, the tools invoke the open-source  \textsf{GMP} (\textsf{GNU} Multi-Precision)~\cite{gmp}.
Since all arithmetic operations in $\compSatisfice$ are integral only, it does not use \textsf{GMP}.

\begin{figure}[t]
\centering

\begin{minipage}{0.49\textwidth}
\centering
\includegraphics[width=\textwidth, trim = 1cm 1cm 1cm 2cm]{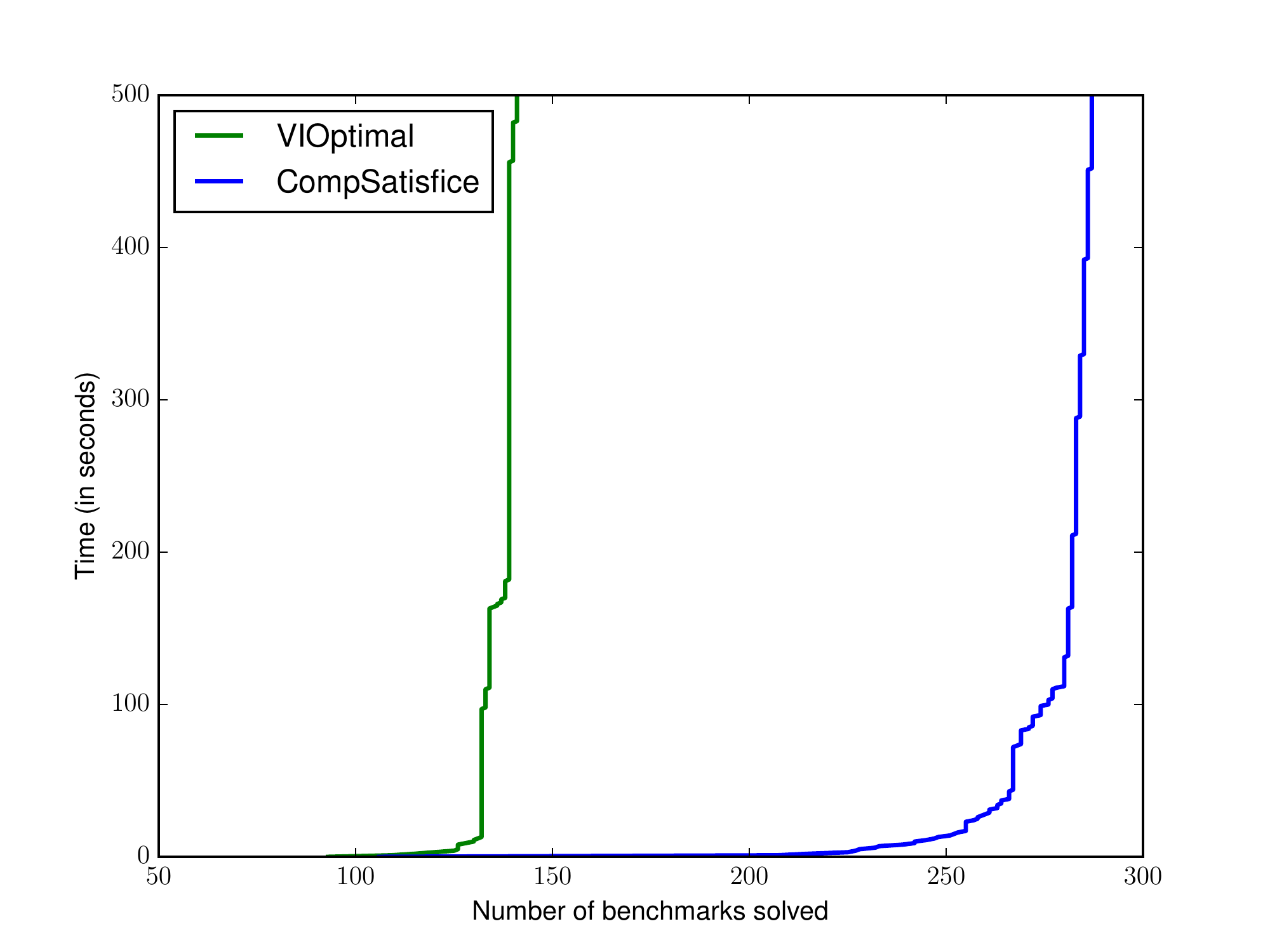}
\caption{Cactus plot.  $\mu=5, v=3$. Total benchmarks = 291}
\label{Fig:Cactus}
\end{minipage}
\hfill
\begin{minipage}{0.49\textwidth}
\centering
\includegraphics[width=\textwidth, trim = 1cm 1cm 1cm 2cm]{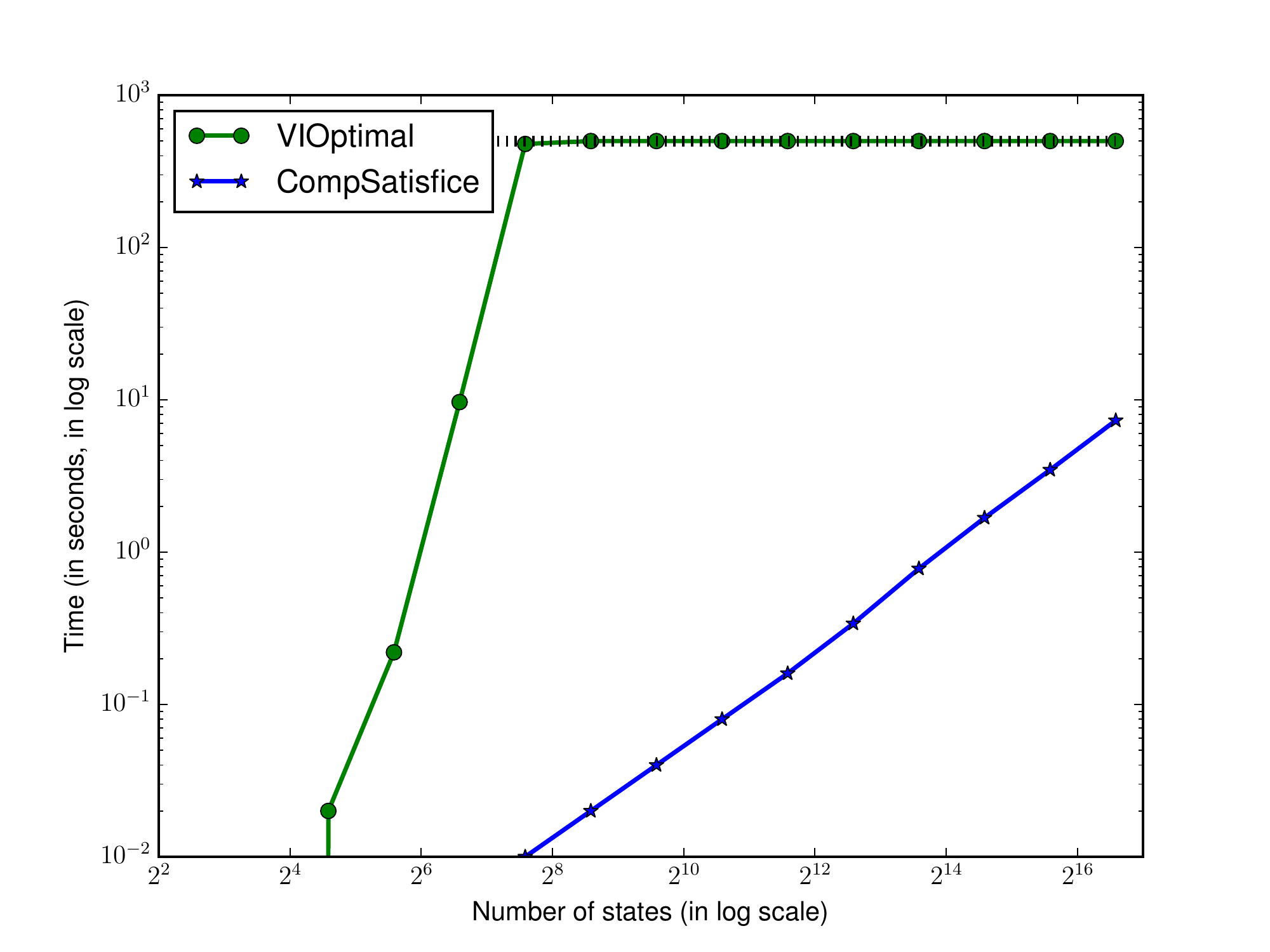}
\caption{Single counter scalable benchmark. $\mu=5, v=3$. Timeout = 500s.}
\label{Fig:Scale}
\end{minipage}

\end{figure}

To avoid completely randomized benchmarks,  we create $\sim$290 benchmarks from $\mathsf{LTL_f}$ benchmark suite~\cite{tabajara2019partition}.  
The state-of-the-art $\mathsf{LTL_f}$-to-automaton tool $\mathsf{Lisa}$~\cite{aaai2020} is used to convert $\mathsf{LTL_f}$ to (non-quantitative) graph games. Weights are randomly assigned to transitions. The number of states in our benchmarks range from 3 to 50000+. Discount factor $d = 2$, threshold $v \in [0-10]$.
Experiments were run on 8 CPU cores at 2.4GHz, 16GB RAM on a 64-bit Linux machine.

\subsubsection{Observations and Inferences}
 Overall, we see that $\visatisfice$ is efficient and scalable, and exhibits steady and predictable performance\footnote{Figures are best viewed online and in color}. 

\paragraph{$\compSatisfice$
outperforms $\optimize$} in both runtime and number of benchmarks solved, as shown in Fig~\ref{Fig:Cactus}. 
It is crucial to note that all benchmarks solved by $\optimize$ had fewer than 200 states. In contrast,  $\compSatisfice$ solves much larger benchmarks with 3-50000+ number of states.

To test scalability, we compared both tools on a set of scalable benchmarks. For integer parameter $i>0$, the $i$-th scalable benchmark has $3\cdot 2^i$ states.  
Fig~\ref{Fig:Scale} plots number-of-states to runtime in $\log$-$\log$ scale.
Therefore, the slope of the straight line will indicate the degree of polynomial (in practice). It shows us that  $\compSatisfice$ exhibits linear behavior (slope $\sim$1), whereas $\optimize$ is much more expensive (slope $>>1$) even in practice.

\paragraph{$\compSatisfice$ is more robust than $\visatisfice$.}

\begin{figure}[t]
\centering
\includegraphics[width=0.5\textwidth, trim=1cm 1cm 1cm 2.5cm]{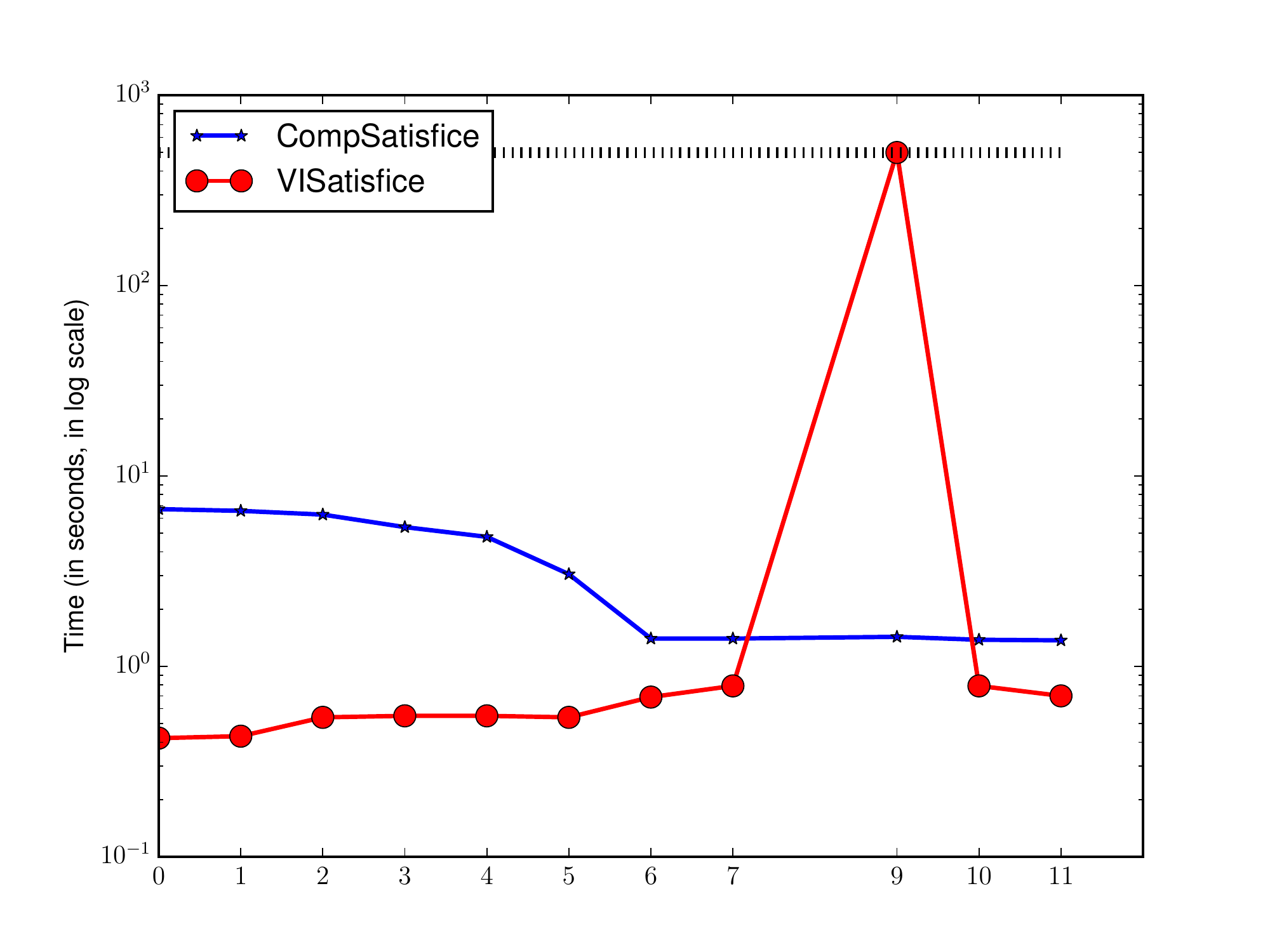}
%\vspace{0.5cm}
\caption{Robustness. Fix benchmark, vary $v$.  $\mu = 5$. Timeout = 500s.}
%\vspace{-0.5cm}
\label{Fig:Robust}
\end{figure}

We compare $\compSatisfice$ and $\visatisfice$  as the threshold value changes. This experiment is chosen due to Theorem~\ref{thrm:nonrobust} which proves that $\visatisfice$ is non-robust. 
As shown in Fig~\ref{Fig:Robust}, the variance in performance of $\visatisfice$ is very high. The appearance of peak close to the optimal value is an empirical demonstration of Theorem~\ref{thrm:nonrobust}.
On that other hand, $\compSatisfice$ stays steady in performance owning to its low complexity. 

\section{Adding Temporally Extended Goals}
\label{Sec:Temporalgoals}

Having witnessed algorithmic improvements of comparator-based satisficing over VI-based algorithms, we now shift focus to the question of applicability. While this section examines this with respect to the ability to extend to temporal goals, this discussion highlights a core strength of comparator-based reasoning in satisficing and shows its promise in a broader variety of problems.

The problem of extending optimal/satisficing solutions with a temporal goal is to determine whether there exists an optimal/satisficing solution that also satisfies a given temporal goal. Formally, given a quantitative game $G$, a labeling function $\L:V\rightarrow 2^{\ap}$ which assigns states $ V$ of $G$ to atomic propositions from the set $\ap$, and a temporal goal $\varphi$ over $\ap$, we say a {\em play $\rho = v_0 v_1\dots$ satisfies $\varphi$} if its proposition sequence given by $\L(v_0)\L(v_1)\dots$ satisfies the formula $\varphi$. Then to solve {\em optimization/satisficing with a temporal goal} is to determine if there exists a solutions that is optimal/satisficing and also satisfies the temporal goal along resulting plays. Prior work has proven that the optimization problem cannot be extended to temporal goals~\cite{chatterjee2017quantitative} unless the temporal goals are very simple safety properties~\cite{bernet2002permissive,wen2015correct}.
In contrast, our comparator-based solution for satisficing can naturally be extended to temporal goals, in fact to all $\omega$-regular properties, owing to its automata-based underpinnings, as shown below: 

\begin{theorem}
\label{thrm:withtemporalgoals}
Let $G$ a quantitative game with state set $V$, $\L:V\rightarrow 2^{\ap}$ be a labeling function over set of atomic propositions $\ap$, and  $\varphi$ be a temporal goal over $\ap$ and $\A_\varphi $ be its equivalent deterministic parity automaton. Let 
$d>1$ be an integer discount factor, $\mu$ be the maximum of the absolute values of costs along transitions, and $v\in Q$ be the threshold value with an $n$-length representation. 
Then, solving satisficing with temporal goals reduces to solving a parity game of size linear in $|V|$, $\mu$, $n$ and $|\A_\varphi|$.
\begin{comment}
\begin{itemize}
    \item Solving satisficing with temporal goals reduces to solving a parity game. The size of the parity game is linear in $|V|$, $\mu$, $n$, and double exponential in  $|\varphi|$.
    \item If $\varphi$ can be represented by a (deterministic) safety/co-safety automata $\A$, then solving the satisficing problem with temporal goals is $\O((|V|+|E|)\cdot \mu\cdot n \cdot |\A|)$, where $|\A| = 2^{2^{\O(|\varphi|)}}$. 
\end{itemize}
\end{comment}
\end{theorem}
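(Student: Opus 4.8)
The plan is to reduce satisficing with a temporal goal to a parity game by taking a three-way synchronized product of the quantitative game $G$, the safety/co-safety comparator $\A$ for $\mu, d, \R, v$ (from Theorem~\ref{thrm:regularWithVal}), and the deterministic parity automaton $\A_\varphi$ for the temporal goal. The key observation is that a winning play must simultaneously (i) satisfy $\DSum{A}{d}\;\R\;v$ on its cost sequence and (ii) satisfy $\varphi$ on its label sequence. Condition (i) is tracked by $\A$ and, because $\A$ is a safety/co-safety automaton, contributes only a safety/reachability constraint. Condition (ii) is tracked by $\A_\varphi$ and contributes the parity acceptance condition. Since a safety or reachability condition can be folded into a parity condition (e.g.\ by making the $\mathsf{bad}$ sink of $\A$ carry an odd, dominating priority, or by merging its acceptance with the parity indices so that reaching $\mathsf{bad}$ forces loss), the product yields a single parity game.

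Concretely, I would define the product game $\GA_\varphi$ with state set $V \times \State \times \State_\varphi$, initial state $(\init, \Start, s_\varphi^I)$, and the partition inherited from $G$ (player ownership depends only on the $V$-component, so $W_0 = V_0\times\State\times\State_\varphi$ and $W_1 = V_1\times\State\times\State_\varphi$). A transition $((v,s,s_\varphi),(v',s',s_\varphi'))$ exists iff $(v,v')\in E$, the comparator moves $(s, a, s')\in\delta$ on the cost $a = \gamma((v,v'))$, and $\A_\varphi$ moves $s_\varphi \to s_\varphi'$ on the label $\L(v)$. This is exactly the construction of Theorem~\ref{thrm:satisficingcomparator} augmented with the $\A_\varphi$-component, so determinism of both $\A$ and $\A_\varphi$ again guarantees that the branching of $\GA_\varphi$ matches that of $G$. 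The priority of a state is read off from the parity index of its $\A_\varphi$-component, adjusted so that any state whose comparator-component is the $\mathsf{bad}$ sink is assigned a losing priority (and, for co-safety comparators, reaching $\mathsf{veryGood}$ is made winning); correctness then follows from combining the comparator characterization with the standard product argument for $\omega$-regular objectives.

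For the size bound, the state space is $|V|\cdot|\State|\cdot|\A_\varphi|$, which by Theorem~\ref{thrm:regularWithVal}, Item~\ref{safety} is $\O(|V|\cdot \tfrac{\mu\cdot n}{d-1}\cdot |\A_\varphi|)$, i.e.\ linear in each of $|V|$, $\mu$, $n$, and $|\A_\varphi|$ as claimed; the number of transitions is correspondingly $\O(|E|\cdot\mu\cdot n\cdot|\A_\varphi|)$. The winning strategy for the satisficing player in $G$ under the temporal goal exists iff the corresponding player wins $\GA_\varphi$, by projecting strategies between the two games.

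The main obstacle I anticipate is not the product construction itself, which is routine, but arguing cleanly that the combined acceptance condition is genuinely a parity condition and that the merge of the comparator's safety/reachability component with $\A_\varphi$'s parity condition is correct for \emph{all four} relations $\R\in\{\leq,\geq,<,>\}$ simultaneously. In the safety cases ($\R\in\{\leq,\geq\}$) the comparator contributes a safety constraint that intersects smoothly with the parity objective, whereas in the co-safety cases ($\R\in\{<,>\}$) it contributes a reachability constraint whose interaction with the parity acceptance (both must eventually be satisfied along the play) requires a little more care in assigning priorities so that reaching $\mathsf{veryGood}$ permanently satisfies the DS-constraint while the parity condition continues to govern $\varphi$. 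I would handle this by treating $\mathsf{veryGood}$ and $\mathsf{bad}$ as absorbing and assigning them extremal even/odd priorities, then verifying that an infinite play is winning in $\GA_\varphi$ exactly when its $G$-projection is both satisficing and $\varphi$-satisfying.
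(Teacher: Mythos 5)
Your proposal is correct and takes essentially the same approach as the paper: the paper carries out the identical reduction as two sequential synchronized products (first $G$ with the comparator, exactly as in Theorem~\ref{thrm:satisficingcomparator}, then the resulting safety/reachability game with the DPA $\A_\varphi$), which is equivalent to your single three-way product and gives the same size bound, linear in $|V|$, $\mu$, $n$, and $|\A_\varphi|$. Your explicit priority assignment making $\mathsf{bad}$/$\mathsf{veryGood}$ absorbing with extremal odd/even priorities spells out a folding step that the paper leaves implicit, but it is the same argument.
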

\begin{proof}
The reduction involves two steps of  synchronized products. The first reduces the satisficing problem to a safety/reachability game while preserving the labelling function. The second synchronization product is between the safety/reachability game with the DPA $\A_\varphi$. These will synchronize on the atomic propositions in the labeling function and DPA transitions, respectively.
Therefore, resulting parity game will be linear in $|V|$, $\mu$ and $n$, and $|\A_\varphi|$.
\qed
\end{proof}

Broadly speaking, our ability to solve  satisficing via automata-based methods is a key feature as it propels a seamless integration of quantitative properties (threshold bounds) with qualitative properties, as both are grounded in automata-based methods. VI-based solutions are inhibited to do so since numerical methods are known to not combine well with automata-based methods which are so prominent with qualitative  reasoning~\cite{BCVCAV18,Kwi07}. This key feature could be exploited in several other problems to show further benefits of comparator-based satisficing over optimization and VI-based methods.

\section{Concluding remarks}

This work introduces the satisficing problem for 
quantitative games with the discounted-sum cost model.
When the discount factor is an integer, we present a comparator-based solution for satisficing, which exhibits algorithmic improvements -- better worst-case complexity and  efficient, scalable, and robust performance --  as well as broader applicability over traditional solutions based on numerical approaches for satisficing and optimization. Other technical contributions include the presentation of the missing proof of value-iteration for optimization and the extension of comparator automata to enable direct comparison to arbitrary threshold values as opposed to zero threshold value only.

An undercurrent of our comparator-based approach for satisficing is that it offers an automata-based replacement to traditional numerical methods. By doing so, it paves a way to combine quantitative and qualitative reasoning without compromising on theoretical guarantees or even performance. This motivates tackling more challenging problems in this area, such as more complex environments, variability in information availability, and their combinations.

\subsubsection{Acknowledgements.}
We thank anonymous reviewers for valuable inputs. This work is  supported in part by NSF grant 2030859 to
the CRA for the CIFellows Project, NSF grants IIS-1527668, CCF-1704883,
IIS-1830549, the ERC CoG 863818 (ForM-SMArt), and an award from the Maryland Procurement Office.

\bibliographystyle{abbrv}
\bibliography{myRef,refs}

\newpage
\appendix

\section{Complexity proof for VI Optimization}

\subsubsection{Lemma~\ref{lem:optimalplay}}
{\em
Let $l = l_1\cdot (l_2)^{\omega}$ represent  an integer cost sequence of a lasso, where $l_1$ and $l_2$ are the cost sequences of the head and loop of the lasso. Let $d = \frac{p}{q}$ be the discount factor. 
Then, $\DSum{l}{d}$ is a rational number with denominator at most ($p^{|l_2|} - q^{|l_2|})\cdot (p^{|l_1|})$.
}

\begin{proof}
The discounted sum of $l$ is given as follows:
\begin{align*}
    \DSum{l}{d} 
    = & \DSum{l_1}{d} + \frac{1}{d^{|l_1|}} \cdot(\DSum{(l_2)^\omega}{d})   \\
    = & \DSum{l_1}{d} + \frac{1}{d^{|l_1|}} \cdot \Big(\DSum{l_2}{d} + \frac{1}{d^{|l_2|}}\cdot \DSum{l_2}{d} + \frac{1}{d^{2\cdot|l_2|}}\cdot \DSum{l_2}{d}+ \dots\Big) \\
    & \text{Taking closed form expression of the term in the parenthesis, we get} \\
    = & \DSum{l_1}{d} + \frac{1}{d^{|l_1|}} \cdot \Big(\frac{d^{|l_2|}}{d^{|l_2|} -1 }\Big)\cdot \DSum{l_2}{d}  \\
    & \text{Let } l_2 = b_0b_1\dots b_{|l_2|-1} \text{ where } b_i \in \Z \\
    =  &\DSum{l_1}{d} + \frac{1}{d^{|l_1|}} \cdot \Big(\frac{d^{|l_2|}}{d^{|l_2|} -1 }\Big)\cdot \Big(b_0 + \frac{b_1}{d} + \dots + \frac{b_{|l_2|-1}}{d^{|l_2|-1}}\Big) \\
    =  &\DSum{l_1}{d} + \frac{1}{d^{|l_1|}} \cdot \Big(\frac{1}{d^{|l_2|} -1 }\Big)\cdot \Big(b_0\cdot d^{|l_2|} + {b_1}\cdot d^{|l_2|-1} + \dots + {b_{|l_2|-1}}{d}\Big) \\
    & \text{Expressing } d =\frac{p}{q} \text{, we get} \\
    = & \DSum{l_1}{d} + \frac{1}{d^{|l_1|}} \cdot \frac{q^{|l_2|}}{p^{|l_2|} - q^{|l_2|}}\cdot \Big(b_0 (\frac{p}{q})^{|l_2|} + \dots b_{|l_2|-1}\cdot \frac{p}{q}\Big) \\
    & \DSum{l_1}{d} + \frac{1}{d^{|l_1|}} \cdot \frac{1}{p^{|l_2|} - q^{|l_2|}}\cdot M, \text{ where } M \in \Z \\
    & \text{Expressing } d =\frac{p}{q} \text{ again, we get} \\
    = & \frac{1}{p^{|l_1|}} \cdot \frac{1}{p^{|l_2|} - q^{|l_2|}} \cdot N,  \text{ where } N \in \Z
\end{align*}
\end{proof}

\subsubsection{Theorem~\ref{thrm:vsquareiterations}}
{\em 
Let $G = (V, \init,E,\gamma)$ be a graph game. The number of iterations required by the value-iteration algorithm or the length of the finite-length game
to compute the optimal value $\opt$ is
\begin{enumerate}
    \item $\O(|V|)$ when discount factor $d\geq 2$, 
    \item $\O\Big(\frac{\log(\mu)}{d-1} + |V|\Big)$ when discount factor $1<d<2$.
    %\item $\O(|V|^2)$ when discount factor $d\geq 2$, 
    %\item $\O\Big(\frac{\log(\mu)}{d-1} + |V|^2\Big)$ when discount factor $1<d<2$.
\end{enumerate}
}

\begin{proof}

Recall, the task is to find a $k$ such that the interval identified by Lemma~\ref{lem:interval} is less than $\frac{1}{\mathsf{bound}_{\mathsf{diff}}}$.
Note that $\mathsf{bound}_{{\opt}} < \mathsf{bound}_{\mathsf{diff}}$. Therefore, $\frac{1}{\mathsf{bound_{diff}}} < \frac{1}{\mathsf{bound_{v}}}$. Hence, there can be only one rational value with denominator $\mathsf{bound}_{\mathsf{\opt}}$ or less in the small interval identified by the chosen $k$. Since the optimal value must also lie in this interval, the unique rational number with denominator $\mathsf{bound}_{\mathsf{\opt}}$ or less must be the optimal value. 
Let $k$ be such that the interval from Lemma~\ref{lem:interval} is less than $\frac{1}{\mathsf{bound}_{\mathsf{diff}}}$.
Then, 
\begin{align*}
     2\cdot\frac{\mu}{d-1\cdot d^{k-1}} \leq &  c\cdot \frac{1}{(p^{(|V|)} - q^{(|V|)})^2\cdot (p^{(2\cdot |V|)})} \text{ for some  } c>0  \\
    2\cdot\frac{\mu}{d-1\cdot d^{k-1}} \leq & c\cdot \frac{q^{4\cdot |V|}}{(p^{(|V|)} - q^{(|V|)})^2\cdot (p^{(2\cdot |V|)})} \text{ for some  } c>0  \\
    2\cdot\frac{\mu}{d-1\cdot d^{k-1}} 
     \leq & 
     c\cdot \frac{1}{(d^{(|V|)} - 1)^2\cdot (d^{(2\cdot |V|)})} \text{ for  } c>0  \\
    d-1\cdot d^{k-1} \geq &
     c' \cdot \mu \cdot (d^{(|V|)} - 1)^2\cdot (d^{(2\cdot |V|)}) \text{ for   } c'>0 \\
     \log(d-1) + (k-1)\cdot \log(d) \geq &
     c'' + \log(\mu) + 2\cdot \log(d^{(|V|)} - 1) +  {2\cdot |V|}\cdot\log(d) \text{ for   } c''>0 
\end{align*}

\begin{comment}
\begin{align*}
     2\cdot\frac{\mu}{d-1\cdot d^{k-1}} \leq &  c\cdot \frac{1}{(p^{(|V|^2)} - q^{(|V|^2)})\cdot (p^{(|V|^2)})} \text{ for some  } c>0  \\
    2\cdot\frac{\mu}{d-1\cdot d^{k-1}} \leq & c\cdot \frac{q^{2\cdot |V|^2}}{(p^{(|V|^2)} - q^{(|V|^2)})\cdot (p^{(|V|^2)})} \text{ for some  } c>0  \\
    2\cdot\frac{\mu}{d-1\cdot d^{k-1}} 
     \leq & 
     c\cdot \frac{1}{(d^{(|V|^2)} - 1)\cdot (d^{(|V|^2)})} \text{ for  } c>0  \\
    d-1\cdot d^{k-1} \geq &
     c' \cdot \mu \cdot (d^{(|V|^2)} - 1)\cdot (d^{(|V|^2)}) \text{ for   } c'>0 \\
     2\cdot\log(d-1) + k\cdot \log(d) \geq &
     c'' + \log(\mu) + \log(d^{(|V|^2)} - 1) +  {|V|^2}\cdot\log(d) \text{ for   } c''>0 
\end{align*}
\end{comment}

The following cases occur depending how large or small the values are:

\begin{description}
    
    \item[When $d\geq 2$:]  
    In this case, both $d$ and $d^{|V|}$ are large. Then, 
    \begin{align*}
    \log(d-1) + (k-1)\cdot \log(d) \geq &
     c'' + \log(\mu) + 2\cdot \log(d^{(|V|)} - 1) +  {2\cdot |V|}\cdot\log(d) \text{ for   } c''>0 \\
      \log(d) + (k-1)\cdot \log(d) \geq &
     c'' + \log(\mu) + 2\cdot |V|\cdot\log(d) +  {2\cdot |V|}\cdot\log(d) \text{ for   } c''>0 \\
     k & = \O(|V|)
    \end{align*}
    
    %\begin{align*}
    %    2\cdot\log(d-1) + k\cdot \log(d) & \geq 
    % c'' + \log(\mu) + \log(d^{(|V|^2)} - 1) +  {|V|^2}\cdot\log(d) \text{ for  } c''>0 \\
    % 2\cdot\log(d) + k\cdot \log(d) & \geq 
    % c'' + \log(\mu) + {(|V|^2)}\log(d) +  {|V|^2}\cdot\log(d) \text{ for  } c''>0 \\
    % k & = \O(|V|^2)
    %\end{align*}

    \item[When $d$ is small but $d^{|V|}$ is large:] In this case, $\log(d) \approx (d-1)$, and $\log(d-1) \approx 2-d$. Then,
     \begin{align*}
     \log(d-1) + (k-1)\cdot \log(d) \geq &
     c'' + \log(\mu) + 2\cdot \log(d^{(|V|)} - 1) +  {2\cdot |V|}\cdot\log(d) \text{ for   } c''>0 \\
    \log(d-1) + (k-1)\cdot \log(d) \geq &
     c'' + \log(\mu) + 2\cdot |V|\cdot\log(d) +  {2\cdot |V|}\cdot\log(d) \text{ for   } c''>0 \\
     2-d + (k-1)\cdot(d-1) \geq &
     c'' + \log(\mu) + 4\cdot |V|\cdot(d-1) \text{ for   } c''>0 \\
     k &= \O\Big(\frac{\log(\mu)}{d-1} + |V|\Big)
     \end{align*}
     %\begin{align*}
     %   2\cdot\log(d-1) + k\cdot \log(d) & \geq 
     %c'' + \log(\mu) + \log(d^{(|V|^2)} - 1) +  {|V|^2}\cdot\log(d) \text{ for  } c''>0 \\
     %2\cdot\log(d-1) + k\cdot \log(d) & \geq 
     %c'' + \log(\mu) + {(|V|^2)}\log(d) +  {|V|^2}\cdot\log(d) \text{ for  } c''>0 \\
     %2\cdot (2-d) + k\cdot (d-1) & \geq 
     %c'' + \log(\mu) + {(|V|^2)}(d-1) +  {|V|^2}\cdot (d-1) \text{ for  } c''>0 \\
     %k = \O\Big(\frac{\log(\mu)}{d-1} + |V|^2\Big)
    %\end{align*}
    \item[When both $d$ and $d^{|V|}$ are small:] Then, in addition to the approximations from the earlier case, $ \log(d^{|V|}-1) \approx (2-d^{|V|})$. So, 
    \begin{align*}
    \log(d-1) + (k-1)\cdot \log(d) \geq & c'' + \log(\mu) + 2\cdot \log(d^{(|V|)} - 1) +  {2\cdot |V|}\cdot\log(d) \text{ for   } c''>0 \\
    \log(d-1) + (k-1)\cdot \log(d) \geq &
     c'' + \log(\mu) + 2\cdot (2-d^{|V|}) +  {2\cdot |V|}\cdot\log(d) \text{ for   } c''>0 \\
     2-d + (k-1)\cdot (d-1) \geq &
     c'' + \log(\mu) + 2\cdot (2-d^{|V|}) +  {2\cdot |V|}\cdot(d-1) \text{ for   } c''>0 \\
     k &= \O\Big(\frac{\log(\mu)}{d-1} + |V|\Big)
    \end{align*}
    
   %  \begin{align*}
%        2\cdot\log(d-1) + k\cdot \log(d) & \geq 
%     c'' + \log(\mu) + \log(d^{(|V|^2)} - 1) +  {|V|^2}\cdot\log(d) \text{ for  } c''>0 \\
%     2\cdot\log(d-1) + k\cdot \log(d) & \geq 
%     c'' + \log(\mu) + 2-d^{(|V|^2)} +  {|V|^2}\cdot\log(d) \text{ for  } c''>0 \\
%     2\cdot (2-d) + k\cdot (d-1) & \geq 
%     c'' + \log(\mu) + 2-d^{(|V|^2)} +  {|V|^2}\cdot (d-1) \text{ for  } c''>0 \\
%     k = \O\Big(\frac{\log(\mu)}{d-1} + |V|^2\Big)
%    \end{align*}
    
\end{description}
\qed 
\end{proof}

\subsubsection{Concrete example to establish $\Omega(|V|)$ lower bound for number of iterations required by the value iteration algorithm}

Recall Fig~\ref{fig:lowerboundexample}, as presented here as well:
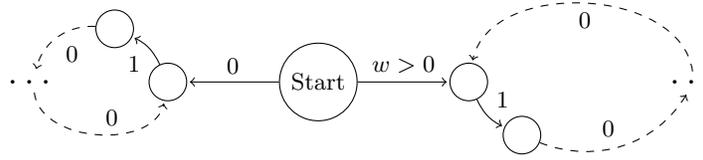
\begin{figure}[h]
\vspace{0cm}
    \centering
    
    \begin{tikzpicture}[shorten >=1pt,on grid,auto] 
   
    \node[state] (q_1) {\footnotesize{Start}};
    \node[state] (q_2)  [right of = q_1, node distance=2cm, minimum size=0.5cm] {};
    \node[state] (q_3)  [left of = q_1, node distance=2cm, minimum size=0.5cm] {};
    \node[] (q_4)  [right of = q_1, node distance=5cm] {\Large{\textellipsis}};
    \node[] (q_5) [left of=q_3, node distance=1.8cm]
    {\Large\textellipsis};
    \node[state] (q_6) [below right of = q_2, minimum size=0.5cm, node distance=1cm]{};
    \node[state] (q_7) [above left of = q_3, minimum size=0.5cm, node distance=1cm] {};
  
    \path[->] 
    (q_1) edge  node {$w>0$} (q_2)
          edge  node [above] {$0$} (q_3)
         
    (q_2) edge [bend right = 20] node {1} (q_6) 
    (q_6) edge [bend right = 40, dashed] node   {0} (q_4) 
    (q_3) edge [bend right = 20] node {1} (q_7) 
    (q_7) edge [bend right = 40, dashed] node   {0} (q_5) 
    (q_4) edge [bend right = 80, dashed] node  {0} (q_2)
    (q_5) edge [bend right = 80, dashed] node {0} (q_3);
    \end{tikzpicture}
    \caption{Sketch of game graph which requires $\Omega(|V|)$ iterations}
%\end{wrapfigure} 
\end{figure}

Let the left hand side loop have $4n$ edges, the right hand side of the loop have $2n$ edges, and $w = \frac{1}{d^{3n}} + \frac{1}{d^{7n}} + \cdots + \frac{1}{d^{m\cdot n -1}}$ such that $m\cdot n -1 = c\cdot n^2$
for a positive integer $c>0$.

One can show for a finite games of length  $(m\cdot n -1)$ or less, the optimal path arises from the loop to the right. But for games of length greater than  $(m\cdot n -1)$, the optimal path will be to due to the left hand side loop. 
\section{Complexity of VI under Bit-Cost model}

Under the bit-cost model, the cost of arithmetic operations depends on the size of the numerical values. Integers are represented in their bit-wise representation. Rational numbers $\frac{r}{s}$ are represented as a tuple of the bit-wise representation of integers $r$ and $s$. For two integers of length $n$ and $m$, the cost of their addition and multiplication is $O(m+n)$ and $O(m\cdot n)$, respectively.

To compute the cost of arithmetic in each iteration of the value-iteration algorithm, we define the cost of a transition $(v,w) \in E$ in the $k$-th iteration as
\begin{align*}
    \transwt_1(v,w)  = \gamma(v,w) \text{  and  }
    \transwt_k(v,w)  =  \gamma(v,w) + \frac{1}{d}\cdot\wt_{k-1}(v)  \text{ for } k >1
\end{align*}
Then, clearly, $\wt_k(v) = \max\{\transwt_k(v,w) | w \in vE\}$ if $v \in V_0$ and $\wt_k(v) = \min\{\transwt_k(v,w) | w \in vE\}$ if $v \in  V_1$. Since, we compute the cost of every transition in each iteration, it is crucial to analyze the size and cost of computing $\transwt$.

\begin{lemma}
\label{lem:transcostform}
Let $G$ be a quantiative graph game. Let $\mu>0$ be the maximum of absolute value of all costs along transitions. Let $d = \frac{p}{q}$ be the discount factor.
Then for all $(v,w )\in E$, for all $k>0$ 
\[
\transwt_k(v,w) = \frac{q^{k-1}\cdot n_1 + q^{k-2}p\cdot n_2 + \cdots + p^{k-1}n_{k}}{p^{k-1}}
\]
where $n_i \in \Z$ such that $|n_i| \leq \mu$  for all $ i \in \{1,\dots , k\}$.
\end{lemma}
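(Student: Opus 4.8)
The plan is to prove the statement by induction on $k$, exploiting the key fact that each value $\wt_{k-1}(\cdot)$ is realized by one of the transition costs $\transwt_{k-1}$ (as a min or a max over the finitely many outgoing edges) and therefore inherits exactly the claimed rational form. Concretely, I would first record the companion observation that $\wt_{k-1}$ shares the form asserted for $\transwt_{k-1}$: since $\wt_{k-1}(w)$ equals $\transwt_{k-1}(w,w')$ for the optimal successor $w'$, it is a single expression $\frac{q^{k-2}m_1 + q^{k-3}p\, m_2 + \cdots + p^{k-2}m_{k-1}}{p^{k-2}}$ with each $|m_i|\le\mu$.

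For the base case $k=1$, I would note that $\transwt_1(v,w)=\gamma(v,w)$ and that the claimed expression collapses to $\frac{q^{0}\cdot n_1}{p^{0}} = n_1$. Because $\mu$ is the maximum of the absolute values of costs, $\gamma(v,w)$ is an integer with $|\gamma(v,w)|\le\mu$, so choosing $n_1=\gamma(v,w)$ finishes this case. For the inductive step, I would substitute the inductive form of $\wt_{k-1}$ into the recurrence $\transwt_k(v,w)=\gamma(v,w)+\frac{1}{d}\cdot\wt_{k-1}(w)$, replace $\frac{1}{d}$ by $\frac{q}{p}$, and carry out the resulting index shift:
\[
\transwt_k(v,w) = \gamma(v,w) + \frac{q^{k-1}m_1 + q^{k-2}p\, m_2 + \cdots + q\,p^{k-2}m_{k-1}}{p^{k-1}}.
\]
Writing the integer $\gamma(v,w)$ over the common denominator $p^{k-1}$ then yields a numerator
\[
q^{k-1}m_1 + q^{k-2}p\, m_2 + \cdots + q\,p^{k-2}m_{k-1} + p^{k-1}\gamma(v,w),
\]
so setting $n_i = m_i$ for $i\le k-1$ and $n_k = \gamma(v,w)$ reproduces the target expression with the correct denominator $p^{k-1}$ and exactly $k$ numerator terms.

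The only real content, and the step I would treat most carefully, is verifying that the coefficient bound $|n_i|\le\mu$ is preserved under this operation. The inductive coefficients $m_i$ already satisfy $|m_i|\le\mu$ and are simply relabelled, while the newly introduced top coefficient $n_k=\gamma(v,w)$ is bounded by $\mu$ by definition of $\mu$; crucially, multiplying $\wt_{k-1}(w)$ by $\frac{q}{p}$ only shifts the powers of $p$ and $q$ attached to each $m_i$ and does not combine or enlarge the integer coefficients, so no coefficient grows. The subtlety worth flagging is that $\wt_{k-1}(w)$ is a min or max over several expressions of this form, which is why I first establish that such a min/max is again a single expression of the same shape (it equals whichever edge attains the optimum); once that is in place, the arithmetic above is routine.
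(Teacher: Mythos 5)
Your proof is correct and takes exactly the approach the paper intends: the paper's entire proof is the single remark that the lemma ``can be proven by induction on $k$,'' and your argument is precisely that induction, carried out in full. The one point needing care --- that $\wt_{k-1}(w)$, being a min or max over finitely many transition costs, is attained by a single $\transwt_{k-1}(w,w')$ and hence inherits the claimed rational form with coefficients bounded by $\mu$ --- is correctly identified and handled.
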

Lemma~\ref{lem:transcostform} can be proven by induction on $k$.

\begin{lemma}
\label{lem:transcostcost}
Let $G$ be a quantiative graph game. Let $\mu>0$ be the maximum of absolute value of all costs along transitions. Let $d = \frac{p}{q}$ be the discount factor.
For all $(v,w) \in E$, for all $k>0$ the cost of computing $\transwt_k(v,w)$   in the $k$-th iteration is $\O(k\cdot \log p \cdot \max\{\log \mu, \log p\})$.
\end{lemma}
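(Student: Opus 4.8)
The plan is to read the required cost straight off the incremental structure of value iteration. In the $k$-th iteration the values $\wt_{k-1}$ from the previous iteration are already in hand, so computing $\transwt_k(v,w) = \gamma(v,w) + \frac{1}{d}\cdot\wt_{k-1}(w)$ amounts to exactly one multiplication (by $\frac{1}{d} = \frac{q}{p}$) and one addition (of the integer $\gamma(v,w)$). I would therefore charge the bit-cost of these two arithmetic operations and show their sum fits under $\O(k\cdot\log p\cdot\max\{\log\mu,\log p\})$. The first step is to pin down the bit-lengths of the operands.

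For the bit-lengths I would invoke Lemma~\ref{lem:transcostform}. Its closed form writes $\transwt_k(v,w)$ (and likewise $\wt_{k-1}$, which is merely a $\min$ or $\max$ of such terms over successors and hence inherits the same size bound) as a fraction with denominator $p^{k-1}$ and numerator $\sum_{i=1}^{k} q^{k-i}p^{i-1} n_i$ with $|n_i|\leq\mu$. Since $d>1$ forces $q<p$, every summand is bounded in absolute value by $\mu\cdot p^{k-1}$, so the numerator has magnitude at most $k\cdot\mu\cdot p^{k-1}$. Hence the denominator has bit-length $\O(k\log p)$ and the numerator bit-length $\O(k\log p + \log\mu)$, and the bit-length of $q$ is $\O(\log p)$.

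With these sizes in hand I would apply the bit-cost rules (adding $m$- and $n$-bit integers costs $\O(m+n)$, multiplying costs $\O(m\cdot n)$). Multiplying the numerator of $\wt_{k-1}(w)$ by $q$ costs $\O((k\log p+\log\mu)\cdot\log p)$. Adding $\gamma(v,w)$ requires putting it over the common denominator $p^{k-1}$: multiplying the $\O(\log\mu)$-bit integer $\gamma(v,w)$ by $p^{k-1}$ costs $\O(\log\mu\cdot k\log p)$, and the final integer addition of two numerators of bit-length $\O(k\log p+\log\mu)$ costs only $\O(k\log p+\log\mu)$. Summing, the dominant contributions are $k(\log p)^2$ and $k\log p\log\mu$, each of which is at most $k\log p\cdot\max\{\log\mu,\log p\}$, while the residual $\log\mu\log p$ term is absorbed for $k\geq 1$; this yields the claimed $\O(k\cdot\log p\cdot\max\{\log\mu,\log p\})$.

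The main obstacle I anticipate is bookkeeping around the addition rather than anything conceptual: because $\gamma(v,w)$ is an integer while $\frac{1}{d}\cdot\wt_{k-1}(w)$ is a fraction with a large denominator, the honest cost must account for rescaling $\gamma(v,w)$ to the common denominator $p^{k-1}$, which is precisely what injects the $k\log p\log\mu$ term and forces the $\max\{\log\mu,\log p\}$ factor into the final bound. I would be careful to use $q<p$ uniformly so that every occurrence of $q$ contributes only $\O(\log p)$ bits, and to note explicitly that $\wt_{k-1}$ shares the size bounds of $\transwt_{k-1}$ (via the $\min/\max$ selection) so that the operands of the iteration-$k$ update genuinely have the asserted lengths.
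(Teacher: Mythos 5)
Your proposal is correct and follows essentially the same route as the paper's proof: both invoke Lemma~\ref{lem:transcostform} to bound the bit-lengths of the numerator and denominator of $\wt_{k-1}$, decompose the update into the same arithmetic steps (multiplying the numerator by $q$, rescaling $\gamma(v,w)$ to the common denominator $p^{k-1}$, and a final linear-cost addition), and sum the bit-costs to obtain $\O(k\cdot\log p\cdot\max\{\log\mu,\log p\})$. Your explicit remark that $\wt_{k-1}$ inherits the size bound of $\transwt_{k-1}$ via the $\min/\max$ selection is a point the paper uses only implicitly, but it does not change the argument.
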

\begin{proof}
We compute the cost of computing $\transwt_k(v,w)$ given that optimal costs have been computed for the $(k-1)$-th iteration.
Recall,
\begin{align*}
    \transwt_k(v,w) & =  \gamma(v,w) + \frac{1}{d}\cdot\wt_{k-1}(v) 
    \text{ } =     \text{ } \gamma(v,w) + \frac{q}{p}\cdot\wt_{k-1}(v) \\
    & = \gamma(v,w) + \frac{q}{p}\cdot \frac{q^{k-2}\cdot n_1 + q^{k-3}p\cdot n_2 + \cdots + p^{k-2}n_{k-1}}{p^{k-2}}
\end{align*}
for some $n_i \in \Z$ such that $|n_i| \leq \mu$. Therefore, computation of $\transwt_k(v,w)$ involves four operations:
\begin{enumerate}
    \item Multiplication of $q$ with $(q^{k-2}\cdot n_1 + q^{k-3}p\cdot n_2 + \cdots + p^{k-2}n_{k-1})$. The later is bounded by $(k-1)\cdot \mu \cdot p^{k-1}$ since $|n_i| \leq \mu$ and $p>q$. The cost of this operation is $\O(\log ((k-1)\cdot \mu \cdot p^{k-1})\cdot \log(p)) = \O(((k-1)\cdot \log p + \log \mu + \log (k-1))\cdot(\log p))$.
    
    \item Multiplication of $p$ with $p^{k-2}$. Its cost is $\O((k-2)\cdot (\log p)^2)$.
    
    \item Multiplication of $p^{k-1}$ with $\gamma(v,w)$. Its cost is $\O((k-1)\cdot \log p \cdot \log \mu)$.
    
    \item Addition of $\gamma(v,w)\cdot p^{k-1}$ with $q\cdot (q^{k-2}\cdot n_1 + q^{k-3}p\cdot n_2 + \cdots + p^{k-2}n_{k-1})$. The cost is linear in their representations. 
\end{enumerate}
Therefore, the cost of computing $\transwt_k(v,w)$ is $\O(k\cdot \log p \cdot \max\{\log \mu, \log p\})$.
\qed
\end{proof}

Now, we can compute the cost of computing optimal costs in the $k$-th iteration from the $k-1$-th iteration. 
\begin{lemma}
\label{lem:costupdateiteation}
Let $G$ be a quantiative graph game. Let $\mu>0$ be the maximum of absolute value of all costs along transitions. Let $d = \frac{p}{q}$ be the discount factor.
The worst-case complexity of computing optimal costs in the $k$-th iteration from the $(k-1)$-th iteration is $\O(|E|\cdot k \cdot \log \mu \cdot \log p)$.
\end{lemma}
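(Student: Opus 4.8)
The plan is to multiply the per-transition cost established in Lemma~\ref{lem:transcostcost} by the number of transitions processed in a single iteration, and then to check that the comparisons needed for the $\max$/$\min$ reductions do not dominate. First I would note that the $k$-th iteration computes $\wt_k(v)$ for every state $v$, where for $v\in V_0$ this value is $\max\{\transwt_k(v,w)\mid w\in vE\}$ and for $v\in V_1$ it is the analogous $\min$. Hence computing all optimal costs of the $k$-th iteration amounts to evaluating $\transwt_k(v,w)$ once for each transition $(v,w)\in E$, followed by one max/min reduction per state. Since $\sum_{v} |vE| = |E|$, there are exactly $|E|$ evaluations of $\transwt$ in the iteration.

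Next I would invoke Lemma~\ref{lem:transcostcost}, which bounds the cost of a single evaluation of $\transwt_k(v,w)$ by $\O(k\cdot \log p\cdot \max\{\log \mu, \log p\})$. Summing over the $|E|$ transitions gives $\O(|E|\cdot k\cdot \log p\cdot \max\{\log \mu, \log p\})$ for the arithmetic of all transition costs in the iteration. Treating the weight magnitude as the dominant parameter (i.e.\ $\mu \geq p$), this is precisely the claimed bound $\O(|E|\cdot k\cdot \log \mu\cdot \log p)$.

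It remains to verify that the $\max$/$\min$ comparisons are cheap enough to be absorbed into this estimate. The key observation, read off from the closed form in Lemma~\ref{lem:transcostform}, is that every $\transwt_k(v,w)$ can be written over the common denominator $p^{k-1}$ with an integer numerator of magnitude at most $k\cdot \mu\cdot p^{k-1}$, that is, of bit-length $\O(k\log p + \log \mu)$. Therefore comparing two such values reduces to comparing their numerators, which costs time linear in their representation, namely $\O(k\log p + \log \mu)$. Performing the at most $|vE|-1$ comparisons needed to reduce each state and summing over all states costs $\O(|E|\cdot(k\log p + \log \mu))$, which is subsumed by the transition-evaluation cost above. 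Combining the two contributions yields the stated worst-case complexity.

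I expect the only delicate point to be this comparison step: one must resist the naive estimate that compares fractions in lowest terms by cross-multiplication, which would introduce a spurious extra factor of $k\log p$ and break the claimed bound. The common-denominator representation guaranteed by Lemma~\ref{lem:transcostform} is exactly what keeps each comparison linear in the operand size and hence dominated by the transition arithmetic, so the burden of the proof really lies in maintaining the values in their unreduced $p^{k-1}$-denominator form across iterations rather than in any fresh calculation.
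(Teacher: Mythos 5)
Your proposal is correct and follows essentially the same route as the paper's own proof: it invokes Lemma~\ref{lem:transcostcost} once per transition for a total of $\O(|E|\cdot k \cdot \log p \cdot \max\{\log \mu, \log p\})$, and then observes that the $\max$/$\min$ reductions are cheap because all values share the denominator $p^{k-1}$, so comparisons reduce to (lexicographic) comparison of numerators. Your treatment of the comparison step is in fact slightly more explicit than the paper's about the numerators' bit-length and about why cross-multiplication must be avoided, but the decomposition and the key lemma are identical.
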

\begin{proof}
The update requires us to first compute the transition cost in the $k$-th iteration for every transition in the game. Lemma~\ref{lem:transcostcost} gives the cost of computing the transition cost of one transition. Therefore, the worst-case complexity of computing transition cost for all transitions is $\O(|E|\cdot k \cdot \log p \cdot \max\{\log \mu, \log p\})$.

To compute the optimal cost for each state, we are required to compute the maximum transition cost of all outgoing transitions from the state. Since the denominator is same, the maximum value can be computed via lexicographic comparison of the numerators on all transitions. Therefore, the cost of computing maximum for all states is $\O(|E|\cdot k \cdot \log \mu \cdot \log p)$.

Therefore, total cost of computing optimal costs in the $k$-th iteration from the $(k-1)$-th iteration is $\O(|E|\cdot k \cdot \log p \cdot \max\{\log \mu, \log p\})$.
\qed
\end{proof}

Finally, the worst-case complexity of computing the optimal value of the quantitative game under bit-cost model for arithmetic operations is as follows:

\subsubsection{Theorem~\ref{thrm:optimizationbitcost}.}
{\em 
Let $G = (V, \init, E, \gamma)$ be a quantitative graph game. 
Let $\mu>0$ be the maximum of absolute value of all costs along transitions. Let $d = \frac{p}{q}>1$ be the discount factor.
The worst-case complexity of computing the optimal value under bit-cost model for arithmetic operations is 
\begin{enumerate}
    \item $\O(|V|^2 \cdot |E| \cdot \log p \cdot \max\{\log \mu, \log p\})$ when $d \geq 2$, 
    \item $\O(\Big(\frac{\log(\mu)}{d-1} + |V|\Big)^2 \cdot |E| \cdot \log p \cdot \max\{\log \mu, \log p\})$ when $1<d < 2$.
\end{enumerate}
}
\begin{proof}
This is the sum of computing the optimal costs for all iterations.

When $d\geq 2$, it is sufficient to perform value iteration for $O(|V|)$ times (Theorem~\ref{thrm:vsquareiterations}).
So, the cost is $\O((1+2+3\cdot + |V|) \cdot |E| \cdot \log p \cdot \max\{\log \mu, \log p\})$. This expression simplifies to $\O(|V|^2 \cdot |E| \cdot \log p \cdot \max\{\log \mu, \log p\})$.

A similar computation solves the case for $1<d<2$.
\qed
\end{proof}

\section{Discounted-sum comparator construction}

\subsubsection{Theorem~\ref{Thrm:DSFull}}
{\em 
%[Safety and co-safety properties of DS-comparators]
Let $\mu>1$ be the upper bound.  
For arbitrary discount factor $d>1$ and threshold value $v$
\begin{enumerate}
\item {\label{Item:Safety}}  DS-comparison languages are safety languages for relations $R \in \{\leq, \geq, =\}$.
%non-strict inequalities and the equality relation
\item {\label{Item:CoSafety}} DS-comparison language are co-safety languages for relations $R \in \{<, >, \neq\}$.
\end{enumerate}
}
\begin{proof}
Due to duality of safety/co-safety languages, it is sufficient to show that DS-comparison language with $ \leq$ is a safety language.

Let us assume that DS-comparison language with $ \leq$  is not a safety language. Let $W$ be a weight-sequence in the complement of DS-comparison language with $ \leq$  such that it does not have a bad prefix. 

Since $W $ is in the complement of DS-comparison language with $ \leq$, $\DSum{W}{d} > v$. By assumption, every $i$-length prefix $W[i]$ of $W$ can be extended to a bounded weight-sequence $W[i]\cdot Y^i$ such that $\DSum{W[i]\cdot Y^i}{d} \leq v$. 

\sloppy
Note that  $\DSum{W}{d} = \DSum{\pre{W}{i}}{d} + \frac{1}{d^i} \cdot \DSum{\suf{W}{i}}{d}$, and $\DSum{\pre{W}{i}\cdot Y^i}{d} = \DSum{\pre{W}{i}}{d} + \frac{1}{d^i} \cdot \DSum{Y^i}{d}$. The contribution of tail sequences $\suf{W}{i}$ and $Y^i$ to the discounted-sum of $W$ and $\pre{W}{i}\cdot Y^i$, respectively diminishes exponentially as the value of $i$ increases. In addition, since and $W$ and  $\pre{W}{i}\cdot Y^i$ share a common $i$-length prefix $\pre{W}{i}$, their discounted-sum values must converge to each other.
The discounted sum of $W$ is fixed and greater than $v$, due to convergence there must be a $k\geq  0$ such that $\DSum{\pre{W}{k}\cdot Y^k}{d} > v$. Contradiction. 
Therefore, DS-comparison language with $ \leq$  is a safety language. 

The above intuition is formalized below:

Since $\DSum{W}{d} > v$ and $\DSum{\pre{W}{i}\cdot Y^i}{d} \leq v$, the difference $\DSum{W}{d} - \DSum{\pre{W}{i}\cdot Y^i}{d} > 0$. 

By expansion of each term, we get $\DSum{W}{d}- \DSum{\pre{W}{i}\cdot Y^i}{d} = \frac{1}{d^i}(\DSum{\suf{W}{i}}{d} - \DSum{Y^i}{d}) \leq  \frac{1}{d^i}\cdot(\mod{\DSum{\suf{W}{i}}{d}}  + \mod{\DSum{Y^i}{d}} )$. Since the maximum value of discounted-sum of sequences bounded by $\mu$ is $\frac{\mu\cdot d}{d-1}$, we also get that $\DSum{W}{d}- \DSum{\pre{W}{i}\cdot Y^i}{d} \leq 2\cdot \frac{1}{d^{i}} \mod{ \frac{\mu\cdot d}{d-1}} $. 

Putting it all together, for all $i\geq 0$ we get
\[
0< \DSum{W}{d} - \DSum{W[i]\cdot Y^i}{d} \leq 2\cdot \frac{1}{d^{i}} \mod{\frac{\mu\cdot d}{d-1}} 
\]
As $i \rightarrow \infty$, $2\cdot \mod{\frac{1}{d^{i-1}}\cdot \frac{\mu}{d-1}} \rightarrow 0$. So, $\lim_{i \rightarrow \infty} (\DSum{W}{d} - \DSum{W[i]\cdot Y^i}{d}) = 0$. Since $\DSum{W}{d}$ is fixed, $\lim_{i \rightarrow \infty}\DSum{W[i]\cdot Y^i}{d} = \DSum{W}{d}$.
%Therefore, $\lim_{i \rightarrow \infty}\DSum{W[i]\cdot Y^i}{d} > 0 $.

By definition of convergence, there exists an index $k\geq 0$ such that $\DSum{W[k]\cdot Y^k}{d}$ falls within the  $\frac{\mod{\DSum{W}{d}}}{2}$ neighborhood of $\DSum{W}{d}$. Finally since $\DSum{W}{d}> 0$, $\DSum{W[k]\cdot Y^k}{d} > 0$ as well. But this contradicts our assumption that for all $i \geq 0$, $\DSum{W[i]\cdot Y^i}{d} \leq 0$.

Therefore, DS-comparator with $ \leq$  is a safety comparator. 
\qed 
\end{proof}

\subsubsection{Lemma~\ref{Lem:GapThreshold}}
{\em 
	Let $\mu>0$ be the integer upper bound,  $d >1$ be an integer discount factor, and the relation $\R$ be the inequality $\leq$. 
	Let $v \in \Q$ be the threshold value such that 
	 $v = v[0]v[1]\dots v[m](v[m+1]v[m+2]\dots v[n])^{\omega}$.
	Let $W$ be a non-empty, bounded, finite weight-sequence.   
	Then, weight sequence $W$ is a bad-prefix of the DS comparison language with $\mu$, $d$, $\leq$ and $v$ iff $\gap{W-\pre{v}{|W|}} >  \frac{1}{d}\cdot \DSum{\post{v}{|W|}}{d} + \frac{\mu}{d-1}$.
}
\begin{proof}
\sloppy
Let $W$ be a bad prefix. Then for all infinite length, bounded weight sequence $Y$ we get that $\DSum{W\cdot Y}{d} > v \implies \DSum{W}{d} + \frac{1}{d^{|W|}}\cdot \DSum{Y}{d} \geq \DSum{\pre{v}{|W|}\cdot\post{v}{|W|}}{d} \implies \DSum{W}{d} - \DSum{\pre{v}{|W|}}{d} > \frac{1}{d^{|W|}}\cdot (\DSum{\post{v}{|W|}}{d} - \DSum{Y}{d})\implies \gap{W-\pre{v}{|W|}} > \frac{1}{d} (\DSum{\post{v}{|W|}}{d} + \frac{\mu\cdot d}{d-1})$.

Next, we prove that if a finite weight sequence $W$ is such that, the W is  a bad prefix. Let $Y$ be an arbitrary infinite but bounded weight sequence. Then $\DSum{W\cdot Y}{d} 
= \DSum{W}{d} + \frac{1}{d^{|W|}}\cdot \DSum{Y}{d} 
= \frac{1}{d^{|W|-1}}\cdot \gap{W} + \frac{1}{d^{|W|}}\cdot \DSum{Y}{d} 
= \frac{1}{d^{|W|-1}}\cdot \gap{W} + \frac{1}{d^{|W|}}\cdot \DSum{Y}{d} + \frac{1}{d^{|W|-1}}\cdot (\gap{\pre{v}{|W|}} - \gap{\pre{v}{|W|}})
$. By re-arrangement of terms we get that 
$\DSum{W\cdot Y}{d}
= \frac{1}{d^{|W|-1}}\cdot \gap{W-\pre{v}{|W|}} + \frac{1}{d^{|W|}}\cdot \DSum{Y}{d} + \frac{1}{d^{|W|-1}}\cdot \gap{\pre{v}{|W|}}$. Since 
 $ \gap{W - \pre{v}{|W|}} > \frac{1}{d}\cdot (\DSum{\post{v}{|W|}}{d} + \frac{\mu\cdot d}{d-1})$ holds, we get that 
 $\DSum{W\cdot Y}{d} 
 >  \frac{1}{d^{|W|}}\cdot (\DSum{\post{v}{|W|}}{d} + \frac{\mu\cdot d}{d-1}) +  \frac{1}{d^{|W|}}\cdot \DSum{Y}{d} + \frac{1}{d^{|W|-1}}\cdot \gap{\pre{v}{|W|}}$. Since minimal value of $\DSum{Y}{d} is \frac{-\mu\cdot d}{d-1}$, the inequality simplifies to  $\DSum{W\cdot Y}{d} > \frac{1}{d^{|W|-1}}\cdot \gap{\pre{v}{|W|}} + \frac{1}{d^{|W|}}\cdot \DSum{\post{v}{|W|}}{d}  \implies \DSum{W\cdot Y}{d} > \DSum{v}{d} = v$. Therefore, $W$ is a bad prefix. 
\qed
\end{proof}

\begin{lemma}
Let $\mu$ and $d>1 $ be the bound and discount-factor, resp.
Let $W$ be a non-empty, bounded, finite weight-sequence.   
Weight sequence $W$ is a very good-prefix of $\A^{\mu,d}_\leq$ iff $\gap{W-\pre{v}{|W|}} \leq \frac{1}{d}\cdot \DSum{\post{v}{|W|}}{d} - \frac{\mu}{d-1}$.

\end{lemma}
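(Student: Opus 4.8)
The plan is to prove this lemma as the exact dual of Lemma~\ref{Lem:GapThreshold}: where that lemma characterizes \emph{bad} prefixes via an \emph{upper} threshold on the recoverable gap, here I characterize \emph{very-good} prefixes, i.e. those finite bounded $W$ for which \emph{every} infinite bounded extension $Y$ satisfies $\DSum{W\cdot Y}{d}\leq v$, via a \emph{lower} threshold. I would prove the two implications separately, reusing verbatim the gap-unrolling identities that drive the proof of Lemma~\ref{Lem:GapThreshold}, with one sign flipped throughout.

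For the forward direction, I assume $W$ is very-good and take an arbitrary bounded infinite $Y$. Writing $\DSum{W\cdot Y}{d} = \DSum{W}{d} + \frac{1}{d^{|W|}}\DSum{Y}{d}$ and $v = \DSum{\pre{v}{|W|}}{d} + \frac{1}{d^{|W|}}\DSum{\post{v}{|W|}}{d}$, the hypothesis $\DSum{W\cdot Y}{d}\leq v$ rearranges to $\DSum{W}{d} - \DSum{\pre{v}{|W|}}{d} \leq \frac{1}{d^{|W|}}(\DSum{\post{v}{|W|}}{d} - \DSum{Y}{d})$. Multiplying by $d^{|W|-1}$ and using $\gap{W-\pre{v}{|W|}} = d^{|W|-1}(\DSum{W}{d} - \DSum{\pre{v}{|W|}}{d})$ yields $\gap{W-\pre{v}{|W|}} \leq \frac{1}{d}(\DSum{\post{v}{|W|}}{d} - \DSum{Y}{d})$. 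Since this holds for every $Y$, I specialize to the $Y$ that \emph{maximizes} $\DSum{Y}{d}$, namely the constant sequence $\mu^\omega$ with $\DSum{Y}{d} = \frac{\mu\cdot d}{d-1}$, obtaining the claimed bound $\gap{W-\pre{v}{|W|}}\leq \frac{1}{d}\DSum{\post{v}{|W|}}{d} - \frac{\mu}{d-1}$. The extremum is attained by a bounded sequence, so no limiting argument is needed.

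For the reverse direction, I assume the gap bound and let $Y$ be arbitrary. Unrolling exactly as in Lemma~\ref{Lem:GapThreshold} — inserting $\pm\gap{\pre{v}{|W|}}$ — gives $\DSum{W\cdot Y}{d} = \frac{1}{d^{|W|-1}}\gap{W-\pre{v}{|W|}} + \frac{1}{d^{|W|}}\DSum{Y}{d} + \frac{1}{d^{|W|-1}}\gap{\pre{v}{|W|}}$. Substituting the hypothesis for the first term and bounding $\DSum{Y}{d}$ above by its maximum $\frac{\mu\cdot d}{d-1}$ makes the two $\frac{\mu}{d-1}$ contributions cancel, leaving $\DSum{W\cdot Y}{d} \leq \frac{1}{d^{|W|}}\DSum{\post{v}{|W|}}{d} + \frac{1}{d^{|W|-1}}\gap{\pre{v}{|W|}}$. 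Since $\frac{1}{d^{|W|-1}}\gap{\pre{v}{|W|}} = \DSum{\pre{v}{|W|}}{d}$, the right-hand side is exactly $\DSum{\pre{v}{|W|}}{d} + \frac{1}{d^{|W|}}\DSum{\post{v}{|W|}}{d} = v$, so $\DSum{W\cdot Y}{d}\leq v$ for all $Y$ and $W$ is very-good.

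The algebra is routine and transfers almost mechanically from the preceding proof; the only genuine subtlety — and the step I would watch most carefully — is the choice of the extremal extension $Y$. In the bad-prefix lemma the binding extension is the all-$(-\mu)$ sequence, which \emph{minimizes} $\DSum{Y}{d}$, whereas here the reversed inequality direction forces the binding extension to be the all-$(+\mu)$ sequence, which \emph{maximizes} it. Getting this direction right, and verifying that the $\mu$-terms coming from the gap hypothesis and from the extremal $Y$ cancel so that the discounted prefix and tail recombine to reconstitute precisely $v$, is where I would focus; everything else is symmetric to Lemma~\ref{Lem:GapThreshold}.
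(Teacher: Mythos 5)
Your proof is correct and follows essentially the same route as the paper's: the forward direction rearranges $\DSum{W\cdot Y}{d}\leq v$ and instantiates $Y$ at the maximizing extension $\mu^\omega$ (so $\DSum{Y}{d}=\frac{\mu\cdot d}{d-1}$), and the converse inserts $\pm\gap{\pre{v}{|W|}}$ into the unrolled discounted sum and bounds $\DSum{Y}{d}$ by its maximum so that the two $\mu$-terms cancel and the right-hand side reconstitutes exactly $v$. If anything, your handling of the $\frac{1}{d^{|W|-1}}$ factors is more careful than the paper's write-up, which in the converse direction misstates the maximum of $\DSum{Y}{d}$ as $\frac{\mu}{d-1}$ instead of $\frac{\mu\cdot d}{d-1}$.
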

\begin{proof}
Let $W$ be a very good prefix. Then for all infinite, bounded sequences $Y$, we get that $\DSum{W\cdot Y}{d} \leq v \implies \DSum{W}{d} + \frac{1}{d^{|W|}} \cdot \DSum{Y}{d} \leq v$. By re-arrangement of terms, we get that $\gap{W-\pre{v}{|W|}} \leq \frac{1}{d}\cdot \DSum{\post{v}{|W|}}{d} - \frac{1}{d}\cdot \DSum{Y}{d}$. Since maximal value of $\DSum{Y}{d} = \frac{\mu\cdot d}{d-1}$, we get that $\gap{W-\pre{v}{|W|}} \leq \frac{1}{d}\cdot \DSum{\post{v}{|W|}}{d} - \frac{\mu}{d-1}$.

Next, we prove the converse. We know 
$\DSum{W\cdot Y}{d} 
= \DSum{W}{d} + \frac{1}{d^{|W|}}\cdot \DSum{Y}{d}
= \frac{1}{d^{|W|-1}}\cdot \gap{W} + \frac{1}{d^{|W|}}\cdot \DSum{Y}{d}
= \frac{1}{d^{|W|-1}}\cdot \gap{W} + \frac{1}{d^{|W|}}\cdot \DSum{Y}{d} +   \frac{1}{d^{|W|-1}}\cdot (\gap{\pre{v}{|W|}} - \gap{\pre{v}{|W|}})
$. 
By re-arrangement of terms we get that 

\sloppy
$\DSum{W\cdot Y}{d}
=\frac{1}{d^{|W|-1}}\cdot \gap{W-\pre{v}{|W|}} + \frac{1}{d^{|W|}}\cdot \DSum{Y}{d} + \frac{1}{d^{|W|-1}}\cdot\gap{\pre{v}{|W|}}$. From assumption we derive that 
$\DSum{W\cdot Y}{d} \leq \frac{1}{d^{|W|}}\cdot \DSum{\post{v}{|W|}}{d} - \frac{\mu}{d-1} + \frac{1}{d^{|W|}}\cdot\DSum{Y}{d}  + \DSum{\pre{v}{|W|}}{d}$. Since maximal value of $\DSum{Y}{d}$ is $ \frac{\mu}{d-1}$, we get that $\DSum{W\cdot Y}{d}\leq v$. Therefore, $W$ is a very good prefix.
\qed
\end{proof}

\end{document}